

\documentclass[letterpaper, 10pt, conference]{ieeeconf}      

\IEEEoverridecommandlockouts                              

\overrideIEEEmargins                                      

\usepackage{setspace}
\usepackage{graphicx}
\usepackage{epstopdf}
\usepackage{caption}
\usepackage{epsfig} 
\usepackage{amsmath} 
\usepackage{amssymb} 
\usepackage{color}
\usepackage{subcaption}
\usepackage{verbatim}  
\newtheorem{theorem}{Theorem}
\newtheorem{lemma}{Lemma}
\newtheorem{proposition}{Proposition}

\newtheorem{remark}{Remark}

\usepackage[ colorlinks = true,
 linkcolor = blue,
 urlcolor = blue,
 citecolor = red,
 anchorcolor = green,
]{hyperref}

\flushbottom
\allowdisplaybreaks[1]

\title{\LARGE \bf 	
	 Hypothesis Testing in the High Privacy Limit
}

\author{Jiachun Liao, Lalitha Sankar,  Vincent Y. F. Tan and Flavio P. Calmon
\thanks{J.~Liao and L.~Sankar are    with the School of Electrical, Computer and
Energy Engineering, Arizona State University, Tempe, AZ  (emails: jiachun.liao@asu.edu, lalithasankar@asu.edu). V.~Y.~F.~Tan is with the Department of Electrical and Computer Engineering and the Department of Mathematics, National University of Singapore (NUS), Singapore (e-mail: {vtan@nus.edu.sg}). F.~du Pin Calmon is with the  IBM T.J.~Watson Research Center in Yorktown Heights, New York (email: fdcalmon@us.ibm.com). } 
\thanks{This work was supported in part by the National Science Foundation under grant CCF\--1350914 and CIF\--1422358.}
}

\begin{document}

\maketitle
\thispagestyle{empty}
\pagestyle{empty}

\begin{abstract}
Binary hypothesis testing under the Neyman-Pearson formalism is a statistical inference framework for distinguishing data generated by two different source distributions. Privacy restrictions may require the curator of the data or the data respondents themselves to share data with the test only after applying a randomizing \textit{privacy mechanism}. Using mutual information as the privacy metric and the relative entropy between the two distributions of the output (post-randomization) source classes as the utility metric (motivated by the Chernoff-Stein Lemma), this work focuses on finding an optimal mechanism that maximizes the chosen utility function while ensuring that the mutual information based leakage for both source distributions is bounded. 
Focusing on the high privacy regime, an Euclidean information-theoretic (E-IT) approximation to the tradeoff problem is presented. It is shown that the solution to the E-IT approximation is independent of the alphabet size and clarifies that a mutual information based privacy metric 
preserves the privacy of the source symbols in inverse proportion to their likelihood.
\end{abstract}
\begin{keywords}
	Binary hypothesis testing,  Privacy, Euclidean information theory
\end{keywords}

\section{INTRODUCTION}
The use of large datasets to test two or more hypotheses (e.g., the 1\% theory of income distribution in the United States) relies on the classical statistical inference framework of binary (or more generally $M$-ary) hypothesis testing. In particular, binary hypothesis testing under the Neyman-Pearson setup is used to distinguish data generated by two different source classes. However, privacy restrictions may require the curator of the data or the data respondents themselves to share data with the test only after applying a randomizing \textit{privacy mechanism} chosen to ensure that some measure of statistical utility is achieved while simultaneously guaranteeing some measure of privacy. 

In this paper, we study this problem of designing a privacy mechanism when the privacy metric is quantified by the mutual information (average leakage) of a source class and its randomized output and the utility metric is quantified by the relative entropy of the output (post-randomization) distributions for the two source classes. The resulting utility-privacy tradeoff problem involves maximizing the relative entropy subject to constraints on information leakage for both source classes memoryless mappings, i.e., conditional probability matrices between the input and output distributions. The resulting optimization problem involves maximizing a convex function over a convex set which is, in general, NP-hard. We approximate the tradeoff problem in the high privacy regime (near zero leakage) using techniques from Euclidean information theory (E-IT); these techniques have found use in developing capacity results in  \cite{EITzheng2008}, \cite{EIT2015}. Our results show that the solution to the E-IT approximation is independent of the alphabet size and that a mutual information based privacy metric exploits the source statistics and preserves the privacy of the source symbols in inverse proportions to their likelihoods.

{\bf Related work}: The problem of   designing privacy mechanisms for hypothesis testing has recently gained interest. Kairouz \textit{et al.}~\cite{Kairouz2014} show that the optimal locally differential privacy (L-DP) mechanism has a \textit{staircase} form and can be obtained as a solution of a linear program. Li and Ochetering~\cite{Li_Ochetering} considered the problem of an adversarial smart meter data collector interested in learning private behavior of consumers via a hypothesis test with the goal of finding the optimal power consumption policy  at the consumer to limit such inference. Our problem differs from both these efforts in using mutual information as the privacy metric. It is worth noting that the L-DP formulation in \cite{Kairouz2014} by its very definition of requiring the mechanism to limit distinction between any two letters of the source alphabet for a given output is focused on the high privacy regime---the authors exploit a specific sub-linear function of the relative entropy function to simplify the tradeoff problem to a linear program. In contrast, the E-IT approximation we use leads to a convex program for which we present closed-form  solutions. 

\section{NOTATION}
We use bold capital letters to represent matrices, e.g. $\mathbf{X}$ is a matrix with the $i^{\mathrm{th}}$ row (column) indexed as $\mathbf{X}_i$ and the entry at the $i^{\mathrm{th}}$ row $j^{\mathrm{th}}$ column indexed as $X_{ij}$; and use bold lower case letters to represent vectors, e.g. $\mathbf{x}$ is a vector with the $i^{\mathrm{th}}$ entry indexed as $x_i$; and denote sets by capital calligraphic letters, e.g., $\mathcal{X}$.
For vectors $\mathbf{a}$ and $\mathbf{b}$, as well as functions $f$ and $g$,  $\big[\frac{f(\mathbf{a})}{g(\mathbf{b})}\big]$ is a diagonal matrix with the $i^{\mathbf{th}}$ diagonal entry being $\frac{f(a_i)}{g(b_i)}$, e.g., the diagonal matrix $[\frac{\mathbf{a}^2}{\sqrt{\mathbf{b}}}]$ has diagonal entries $\frac{a_i^2}{\sqrt{b_i}}$. We denote the $l_2$-norm of a vector $\mathbf{x}$ by $\|\mathbf{x}\|$. Throughout, probability mass functions are denoted as row vectors, e.g., $\mathbf{p}$. $D$ denotes relative entropy and $I$ denotes mutual information. We can write the mutual information between two random variables or between a probability distribution and the corresponding conditional probability matrix, e.g. for two random variables $X,\hat{X}$ with $X\sim \mathbf{p}$ and $\hat{X}|\{X=x\}\sim \mathbf{P}_{\hat{X}|X=x}$, the mutual information is denoted as $I(X;\hat{X})$ or $I(\mathbf{p},\mathbf{P}_{\hat{X}|X})$.

\section{SYSTEM MODEL}
We consider the binary hypothesis testing framework that distinguishes between two independent and identically distributed (i.i.d.) discrete source classes. Let $X^n=(X_1,X_2,\ldots,X_n)$ denote the sequence of $n$ random variables to be used for hypothesis testing, where entries $X_i\in \mathcal{X}, i \in \{1,2,\ldots,n\}$ are drawn i.i.d.\ according to a probability distribution $\mathbf{p}$, such that its realization $\mathbf{x}^n$ is the available data. The two hypotheses are $H_1:\, \mathbf{p}=\mathbf{p}_1$ and $H_2:\, \mathbf{p}=\mathbf{p}_2$. Let $\mathcal{A}_1^{(n)}$ and $\mathcal{A}_2^{(n)}=\big(\mathcal{A}_1^{(n)}\big)^c$ be the decision regions in $\mathcal{X}^n$, such that from the Neyman-Pearson lemma, $\mathcal{A}_1^{(n)} =\big\{ \mathbf{x}^n \,:\,  \frac{\mathbf{p}_1(\mathbf{x}^n)}{\mathbf{p}_2(\mathbf{x}^n)}>T\big\}$, where $T$ is the threshold for the  likelihood ratio test. Furthermore, let $\beta^{(n)}_1$ and $\beta^{(n)}_2$ be the probabilities of error, such that type-I error $\beta^{(n)}_1$ (resp.\ type-II error $\beta^{(n)}_2$) is the probability of error from choosing $H_2$ (resp. $H_1$) when $\mathbf{x}^n$ is generated by $\mathbf{p}_1$ (resp.\ $\mathbf{p}_2$). From the Chernoff-Stein lemma~\cite[Chap.~11]{IT_Cover}, for a desired $\beta^{(n)}_1<\delta$ (where $\delta>0$), i.e., bounding the probability of declaring that the test sequence $\mathbf{x}^n$ is from $\mathbf{p}_2$ when $\mathbf{p}_1$ is true, the type-II error exponent of minimal $\beta^{(n)}_2$, denoted as $\beta^{(n)}_2(\delta)$, is $\lim_{n\rightarrow \infty} -\frac{1}{n}\log\beta^{(n)}_2(\delta) =D(\mathbf{p}_1 \| \mathbf{p}_2)$ independent of $\delta$.

In most data collection and classification applications, there may be an additional requirement to ensure that the data while providing utility does not leak information about the respondents of the data. This in turn implies that the data provided to the hypothesis test is not the same as the original data, but instead a randomized version that ensures some measures of privacy (information leakage) and utility. Specifically, we use mutual information as a measure of a stochastic information leakage between the input sequence and the randomized output sequence used by the test. The goal is to find the randomizing mapping, henceforth referred to a \textit{privacy mechanism}, such that a measure of utility of the data is maximized while ensuring that the mutual information based leakages for all source classes are bounded.

As mentioned earlier, we assume that the source distributions are i.i.d.; furthermore, modeling the large dataset problem, we assume that the length of the data sequence $n$ is large. In the absence of a \textit{privacy mechanism}, asymptotic results of the binary hypothesis testing problem exploit  the i.i.d.\ nature of the sources. Preserving the utility of the data usually requires preserving its i.i.d.\ property, and thus, we assume that the randomizing \textit{privacy mechanism} for the hypothesis testing problem considered here is an i.i.d.\ mechanism. Let $\mathbf{W}$, an $M\times N$ conditional probability matrix, denote the \textit{privacy mechanism} which maps the $M$ letters of the input alphabet $\mathcal{X}$ to $N \leq M$ letters of the output alphabet $\hat{\mathcal{X}}$. The assumption $N\leq M$ implies that we map the input probability distribution to a simplex with at most the same cardinality. Thus, the i.i.d.\ sequence $X^n \sim \mathbf{p}_k, k\in \{1,2\}$, is mapped to an output sequence $\hat{X}^n$ whose entries $\hat{X}_j \in \hat{\mathcal{X}}$ for all $j\in \{1,\ldots,N\}$ are i.i.d.\ with the distribution $\mathbf{p}_k\mathbf{W}$. Thus the hypothesis test is now performed on a sequence $\hat{X}^n$ that belongs to one of two source classes with distributions\footnote{We remind that the distribution $\mathbf{p}_1\mathbf{W}$ is the {\em output distribution} induced by the input $\mathbf{p}$ and the privacy mechanism (transition matrix) $\mathbf{W}$.} $\mathbf{p}_1\mathbf{W}$ and $\mathbf{p}_2\mathbf{W}$, respectively. For the two i.i.d.\ source classes, the type-II error exponent of the test now is $\lim_{n\rightarrow \infty} -\frac{1}{n}\log\beta^{(n)}_2(\delta) =D(\mathbf{p}_1\mathbf{W} \| \mathbf{p}_2\mathbf{W})$.

 To design an appropriate \textit{privacy mechanism}, we wish to maximize the error exponent $D(\mathbf{p}_1\mathbf{W} \| \mathbf{p}_2\mathbf{W})$ subject to the following leakage constraints: $I(\mathbf{p}_1,\mathbf{W})\leq \epsilon_1$ and $I(\mathbf{p}_2,\mathbf{W})\leq \epsilon_2$. Formally, the utility-privacy tradeoff problem is that finding the optimal \textit{privacy mechanism} $\mathbf{W}^*$, i.e., the optimal solution of the following optimization problem:
 \begin{equation}\label{eq:OrigProHP}
 	\begin{aligned}
 		\max_{\substack{\mathbf{W}}}\quad & D(\mathbf{p}_1\mathbf{W} \| \mathbf{p}_2\mathbf{W})\\		
 		\mathrm{s.t.} \quad & I(\mathbf{p}_1,\mathbf{W})\leq \epsilon_1\\
 		& I(\mathbf{p}_2,\mathbf{W})\leq \epsilon_2
 	\end{aligned}
 \end{equation}
 where $\epsilon_1 \in [0, H(\mathbf{p}_1)] $ and $\epsilon_2 \in [0, H(\mathbf{p}_2)]$ are the permissible upper  bounds of the privacy leakages for the two source classes. The problem in \eqref{eq:OrigProHP} maximizes a convex objective function over a convex feasible region, thus, its optimal solutions are on the boundary of the feasible region. When the feasible region is a polytope,  closed-form  expressions for the optimal solution and objective can potentially be computed. However, the region bounded by the two mutual information constraints is not a polytope. While there exist computational methods to obtain a solution that approximate the feasible region by an intersection of polytopes \cite{MinConcave}, our focus is on developing a meaningful approximation for the problem in~\eqref{eq:OrigProHP} in specific privacy or utility regimes. 
 
 To this end, we focus on developing approximations of \eqref{eq:OrigProHP}. One possible regime to use an approximation approach is the high privacy regime, i.e., $0\leq\epsilon_k\ll \min(H(\mathbf{p}_1),H(\mathbf{p}_2))$ for all $k\in\{1,2\}$. In this regime, one can use Taylor series expansions to approximate both the objective function and the constraints. Such approximations to the relative entropy and the mutual information were first considered in \cite[Theorem 4.1]{ITSt_Tutorial}. More recently, analyses based on such approximations, referred to as Euclidean Information Theory (E-IT), have been found to be useful in a variety of graphical model learning~\cite{Tan11} and network information theory problems \cite{EITzheng2008}\cite{EIT2015}. In the following section, we detail the approximation problem and the corresponding closed-form  solution.

\section{MAIN RESULTS}\label{section:main_results}
In this section, we summarize all main results. Our first result captures the mapping of the utility-privacy tradeoff problem in \eqref{eq:OrigProHP} to an Euclidean information-theoretic (E-IT) approximation problem. Our second result summarizes a closed-form  result for the E-IT approximation problem while our last result introduces the approximate solution for \eqref{eq:OrigProHP} based on the solution of the E-IT approximation problem.\\
Without loss of generality, all our results are presented for the case of $|\mathcal{\hat{X}}|=M$, i.e., the size of output alphabet is the same as the input alphabet. We highlight in the sequel how the method can be used for $|\mathcal{\hat{X}}|=N<M$.

\begin{proposition}\label{Proposition:approximation}
	In the high privacy regime with $0\leq\epsilon_k \ll \min(H(\mathbf{p}_1),H(\mathbf{p}_2)),\, k \in \{1,2\}$, the \textit{privacy mechanism} $\mathbf{W}$ is chosen as a perturbation of a perfect privacy $(\epsilon_1=\epsilon_2=0)$ achieving mechanism $\mathbf{W}_0$, i.e., $\mathbf{W}=\mathbf{W}_0+\boldsymbol{\Theta}$. The mechanism $\mathbf{W}_0$ is a rank-1 matrix with every row being equal to a row vector $\mathbf{w}_0$ whose entries $w_{0j}$ satisfy $\sum_{j=1}^{M}w_{0j}=1$ and $w_{0j}>0$, for all $j\in\{1,\ldots,M\}$. The perturbation matrix $\boldsymbol{\Theta}$ is an $M\times M$ matrix with entries $\Theta_{ij}$ satisfying $\sum_{j=1}^{M}\Theta_{ij}=0$ and $|\Theta_{ij}| \leq \rho w_{0j}$, for all $i,j\in \{1, \ldots, M\}$. For this perturbation model,
	the utility-privacy tradeoff problem in \eqref{eq:OrigProHP} can be approximated as 
	\begin{equation}\label{eq:OrigProHP_approx}
	\begin{aligned}
	\max_{\substack{\boldsymbol{\Theta}}}\quad &\frac{1}{2}\big\|(\mathbf{p}_1-\mathbf{p}_2)\boldsymbol{\Theta}[(\mathbf{w}_0)^{-\frac{1}{2}}]\big\|^2	\\
	\mathrm{s.t.} \quad &\frac{1}{2}\sum_{i=1}^{M}p_{1i}\big\|\boldsymbol{\Theta}_i[(\mathbf{w}_0)^{-\frac{1}{2}}]\big\|^2 \leq \epsilon_1;\\
	&\frac{1}{2}\sum_{i=1}^{M}p_{2i}\big\|\boldsymbol{\Theta}_i[(\mathbf{w}_0)^{-\frac{1}{2}}]\big\|^2 \leq \epsilon_2.\\
	& \sum_{j=1}^{M}\Theta_{ij}=0, \quad i\in \{1, \ldots, M\}
	\end{aligned}
	\end{equation}
	where $p_{ki}$, for $k=1,2$, is the $i^{\mathrm{th}}$ entry of $\mathbf{p}_k$, $\boldsymbol{\Theta}_i$ is the $i^{\mathrm{th}}$ row of $\boldsymbol{\Theta}$, and $[(\mathbf{w}_0)^{-\frac{1}{2}}]$ is a diagonal matrix with $i^{\mathrm{th}}$ diagonal entry, for all $i$, being $(w_{0i})^{-\frac{1}{2}}$. For ease of analysis, setting $\mathbf{A}=\boldsymbol{\Theta}[(\mathbf{w}_0)^{-\frac{1}{2}}]$, \eqref{eq:OrigProHP_approx} can be rewritten as 
	\begin{equation}\label{eq:OrigProHP_approx_1}
	\begin{aligned}
	\max_{\substack{\mathbf{A}}}\quad & \frac{1}{2}(\mathbf{p}_1-\mathbf{p}_2)\mathbf{A}\mathbf{A}^T(\mathbf{p}_1-\mathbf{p}_2)^T\\
	\mathrm{s.t.} \quad &\frac{1}{2}\sum_{i=1}^{M}p_{1i}\mathbf{A}_i\mathbf{A}_i^T\leq \epsilon_1\\
	&\frac{1}{2}\sum_{i=1}^{M}p_{2i}\mathbf{A}_i\mathbf{A}_i^T \leq \epsilon_2\\
	& \mathbf{A}(\sqrt{\mathbf{w}_0})^T=\mathbf{0}.
	\end{aligned}
	\end{equation}	
\end{proposition}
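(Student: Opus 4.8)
The plan is to treat the entire problem as a perturbation of the perfectly private operating point $\boldsymbol{\Theta}=\mathbf{0}$ and to replace every relative entropy by its second-order (local $\chi^2$) expansion, which is the essence of the E-IT method. First I would check that the perturbation model is well posed: the conditions $\sum_{j}\Theta_{ij}=0$ and $|\Theta_{ij}|\le \rho w_{0j}$ with $\rho<1$ make each row of $\mathbf{W}=\mathbf{W}_0+\boldsymbol{\Theta}$ sum to one and remain strictly positive, so $\mathbf{W}$ is a valid transition matrix, and at $\boldsymbol{\Theta}=\mathbf{0}$ one has $\mathbf{p}_k\mathbf{W}_0=\mathbf{w}_0$ for $k=1,2$, i.e. zero leakage and zero utility. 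The identities $\mathbf{p}_k\mathbf{W}=\mathbf{w}_0+\mathbf{p}_k\boldsymbol{\Theta}$ and $\mathbf{W}_i=\mathbf{w}_0+\boldsymbol{\Theta}_i$ then show that \emph{every} distribution entering \eqref{eq:OrigProHP} --- the two output laws and all $M$ conditional rows --- lies within $O(\rho)$ of the common reference $\mathbf{w}_0$, which is exactly the regime in which the quadratic approximations are valid with an $O(\rho^3)$ remainder controlled by the box constraint.

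Second I would approximate the objective. Since $D(\mathbf{p}_1\mathbf{W}\,\|\,\mathbf{p}_2\mathbf{W})=D(\mathbf{w}_0+\mathbf{p}_1\boldsymbol{\Theta}\,\|\,\mathbf{w}_0+\mathbf{p}_2\boldsymbol{\Theta})$ with both arguments near $\mathbf{w}_0$, the expansion $D(\mathbf{q}_1\|\mathbf{q}_2)\approx\frac{1}{2}\sum_{j}(q_{1j}-q_{2j})^2/q_{2j}$ applies; the numerator difference is $(\mathbf{p}_1-\mathbf{p}_2)\boldsymbol{\Theta}$, and because this is already second order the denominator may be replaced by $\mathbf{w}_0$ at the cost of an $O(\rho^3)$ term. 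This gives $\frac{1}{2}\sum_j((\mathbf{p}_1-\mathbf{p}_2)\boldsymbol{\Theta})_j^2/w_{0j}=\frac{1}{2}\|(\mathbf{p}_1-\mathbf{p}_2)\boldsymbol{\Theta}[(\mathbf{w}_0)^{-1/2}]\|^2$, the objective of \eqref{eq:OrigProHP_approx}.

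Third, and this is where I expect the real difficulty, I would approximate the two leakage constraints. Writing $I(\mathbf{p}_k,\mathbf{W})=\sum_i p_{ki}D(\mathbf{W}_i\,\|\,\mathbf{p}_k\mathbf{W})$ and expanding each summand about $\mathbf{w}_0$, the first-order contribution vanishes because $\sum_j\Theta_{ij}=0$ and $\sum_j(\mathbf{p}_k\boldsymbol{\Theta})_j=0$, leaving a weighted $\chi^2$ term. The subtlety is which reference appears in that $\chi^2$: the faithful expansion yields $\frac{1}{2}\sum_i p_{ki}\sum_j(\Theta_{ij}-(\mathbf{p}_k\boldsymbol{\Theta})_j)^2/w_{0j}$, i.e. the $\chi^2$ of each row to the \emph{true} induced marginal $\mathbf{p}_k\mathbf{W}$, and the variance decomposition $\sum_i p_{ki}(\Theta_{ij}-(\mathbf{p}_k\boldsymbol{\Theta})_j)^2=\sum_i p_{ki}\Theta_{ij}^2-(\mathbf{p}_k\boldsymbol{\Theta})_j^2$ shows this differs from the stated constraint by the nonnegative quantity $\frac{1}{2}\sum_j(\mathbf{p}_k\boldsymbol{\Theta})_j^2/w_{0j}$. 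The plan is to reference every conditional row to the common point $\mathbf{w}_0$ (equivalently, to drop this nonnegative mean-correction), which turns $I(\mathbf{p}_k,\mathbf{W})$ into the surrogate $\frac{1}{2}\sum_i p_{ki}\|\boldsymbol{\Theta}_i[(\mathbf{w}_0)^{-1/2}]\|^2$ of \eqref{eq:OrigProHP_approx}; because the discarded term is $\ge 0$ this surrogate upper-bounds the true leakage, so imposing it at level $\epsilon_k$ still enforces the original bound to order $O(\rho^3)$. Establishing that this replacement together with the Taylor remainders is harmless in the limit $\epsilon_k,\rho\to0$ is the main obstacle.

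Finally I would recast \eqref{eq:OrigProHP_approx} as \eqref{eq:OrigProHP_approx_1} through the invertible linear change of variable $\mathbf{A}=\boldsymbol{\Theta}[(\mathbf{w}_0)^{-1/2}]$. This step is routine: the weighted norms become the quadratic forms $\frac{1}{2}(\mathbf{p}_1-\mathbf{p}_2)\mathbf{A}\mathbf{A}^T(\mathbf{p}_1-\mathbf{p}_2)^T$ and $\frac{1}{2}\sum_i p_{ki}\mathbf{A}_i\mathbf{A}_i^T$, while the row-sum condition transforms via $[(\mathbf{w}_0)^{-1/2}](\sqrt{\mathbf{w}_0})^T=\mathbf{1}^T$ (the all-ones column vector) into $\mathbf{A}(\sqrt{\mathbf{w}_0})^T=\boldsymbol{\Theta}\mathbf{1}^T=\mathbf{0}$, i.e. the vector of row sums of $\boldsymbol{\Theta}$ is zero. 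I would close by remarking that the case $N<M$ is identical after taking $\boldsymbol{\Theta}$ rectangular and $\mathbf{w}_0\in\mathbb{R}^{N}$.
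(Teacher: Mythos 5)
Your proposal is correct and follows essentially the same route as the paper: a perturbation $\mathbf{W}=\mathbf{W}_0+\boldsymbol{\Theta}$ about a perfect-privacy point, second-order ($\chi^2$) expansions of $D(\mathbf{p}_1\mathbf{W}\,\|\,\mathbf{p}_2\mathbf{W})$ and $I(\mathbf{p}_k,\mathbf{W})$ with all reference distributions pushed to $\mathbf{w}_0$ via the neighborhood conditions, and finally the substitution $\mathbf{A}=\boldsymbol{\Theta}[(\mathbf{w}_0)^{-\frac{1}{2}}]$ turning $\sum_{j}\Theta_{ij}=0$ into $\mathbf{A}(\sqrt{\mathbf{w}_0})^T=\mathbf{0}$. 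If anything, your treatment of the leakage constraints is more explicit than the paper's: where the paper passes from $\boldsymbol{\Theta}_i-\mathbf{p}_k\boldsymbol{\Theta}$ to $\boldsymbol{\Theta}_i$ in \eqref{eq:ApproximateI_step4} with a bare approximation sign (even though the dropped term is of the same order as what is kept), you quantify the discarded quantity via the variance decomposition and note it is nonnegative, so the surrogate constraint is conservative.
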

\begin{remark}
 The approximation problems in \eqref{eq:OrigProHP_approx} and \eqref{eq:OrigProHP_approx_1} result from the observation that both the relative entropy and mutual information can be approximated by the $\chi^2$ divergence in the high privacy (low leakage) regime. Thus, the problem simplifies to maximizing a quadratic objective function with quadratic constraints, whose optimal solution is a perturbation matrix $\boldsymbol{\Theta}^*=\mathbf{A}^*\big[(\sqrt{\mathbf{w}_0})\big]$ and  $\mathbf{A}^*$ optimizes the problem in \eqref{eq:OrigProHP_approx_1}.
\end{remark}

\begin{theorem}\label{Theorem:approximation_simplification}
	The optimization problem in \eqref{eq:OrigProHP_approx_1} reduces to one with a vector variable $\boldsymbol{\alpha} \in \mathbb{R}^{M \times 1}$ as
	\begin{equation}\label{eq:OrigProHP_approx_2}
	\begin{aligned}
	\max_{\substack{\boldsymbol{\alpha}}}\quad & 
	\frac{1}{2}\boldsymbol{\alpha}(\mathbf{p}_1-\mathbf{p}_2)^T(\mathbf{p}_1-\mathbf{p}_2)\boldsymbol{\alpha}^T\\
	\mathrm{s.t.} \quad &\frac{1}{2}\boldsymbol{\alpha}[\mathbf{p}_1]\boldsymbol{\alpha}^T\leq \epsilon_1\\
	&\frac{1}{2}\boldsymbol{\alpha}[\mathbf{p}_2]\boldsymbol{\alpha}^T \leq \epsilon_2,
	\end{aligned}
	\end{equation}
	where the absolute value of the $i^{\mathrm{th}}$ entry $\alpha_i$ of $\boldsymbol{\alpha}$, for all $i \in \{1,..,M\}$, is the Euclidean norm of the $i^{\mathrm{th}}$ row $\mathbf{A}_i$ of $\mathbf{A}$. The $M\times M$ matrix $\mathbf{A}^*$ optimizing \eqref{eq:OrigProHP_approx_1} is obtained from the optimal solution $\boldsymbol{\alpha}^*$ of \eqref{eq:OrigProHP_approx_2} as 
	a rank-1 matrix whose $i^{\mathrm{th}}$ row, for all $i$, is given by $\alpha_i^*\mathbf{v}$ where $\alpha_i^*$ is the $i^{\mathrm{th}}$ entry of  $\boldsymbol{\alpha}^*$, and $\mathbf{v}$ is a unit norm vector that is orthogonal to $\sqrt{\mathbf{w}_0}$, such that
	\begin{align}
	\label{eq:lemma1_Aopt}
	&\mathbf{A}^*=(\boldsymbol{\alpha}^*)^T\mathbf{v}\\
	\label{eq:lemma1_v}
	&\mathbf{v}(\sqrt{\mathbf{w}_0})^T=0\\
	\label{eq:lemma1_vnorm}
	&\|\mathbf{v}\|=1.
	\end{align}    
\end{theorem}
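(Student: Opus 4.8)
The plan is to prove the reduction by showing that problems \eqref{eq:OrigProHP_approx_1} and \eqref{eq:OrigProHP_approx_2} share the same optimal value, and that a maximizer of the matrix problem can always be taken in the rank-1 form \eqref{eq:lemma1_Aopt}. The guiding observation is a separation between how $\mathbf{A}$ enters the objective and how it enters the constraints. All three constraints of \eqref{eq:OrigProHP_approx_1} depend on $\mathbf{A}$ only row-by-row: the two leakage constraints involve only the squared row norms $\mathbf{A}_i\mathbf{A}_i^T=\|\mathbf{A}_i\|^2$, and the equality $\mathbf{A}(\sqrt{\mathbf{w}_0})^T=\mathbf{0}$ merely forces each row $\mathbf{A}_i$ into the $(M-1)$-dimensional subspace orthogonal to $\sqrt{\mathbf{w}_0}$. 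By contrast, the objective depends on $\mathbf{A}$ only through the single aggregated row vector $(\mathbf{p}_1-\mathbf{p}_2)\mathbf{A}=\sum_{i}(p_{1i}-p_{2i})\mathbf{A}_i$, since it equals $\tfrac{1}{2}\|(\mathbf{p}_1-\mathbf{p}_2)\mathbf{A}\|^2$. This mismatch is what allows the rows to be aligned at no cost to feasibility.

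For the first inequality, I would take an arbitrary feasible $\mathbf{A}$ for \eqref{eq:OrigProHP_approx_1} and define a vector with entries $\alpha_i=\mathrm{sign}(p_{1i}-p_{2i})\,\|\mathbf{A}_i\|$, so that $|\alpha_i|=\|\mathbf{A}_i\|$ and $\alpha_i^2=\mathbf{A}_i\mathbf{A}_i^T$. The substitution $\alpha_i^2=\mathbf{A}_i\mathbf{A}_i^T$ turns the two leakage constraints of \eqref{eq:OrigProHP_approx_1} verbatim into those of \eqref{eq:OrigProHP_approx_2}, so this $\boldsymbol{\alpha}$ is feasible there. The triangle inequality then gives $\|\sum_i (p_{1i}-p_{2i})\mathbf{A}_i\|\le \sum_i |p_{1i}-p_{2i}|\,\|\mathbf{A}_i\|=\sum_i (p_{1i}-p_{2i})\alpha_i$, where the sign choice makes the last equality exact. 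Squaring shows that the objective of \eqref{eq:OrigProHP_approx_2} at $\boldsymbol{\alpha}$ dominates that of \eqref{eq:OrigProHP_approx_1} at $\mathbf{A}$, so the optimal value of \eqref{eq:OrigProHP_approx_1} is at most that of \eqref{eq:OrigProHP_approx_2}.

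For the reverse inequality and the rank-1 claim, I would start from an optimizer $\boldsymbol{\alpha}^*$ of \eqref{eq:OrigProHP_approx_2}, fix any unit vector $\mathbf{v}$ orthogonal to $\sqrt{\mathbf{w}_0}$ --- such a vector exists because $M\ge 2$ makes that orthogonal complement nontrivial, which is precisely \eqref{eq:lemma1_v} and \eqref{eq:lemma1_vnorm} --- and set $\mathbf{A}^*=(\boldsymbol{\alpha}^*)^T\mathbf{v}$ as in \eqref{eq:lemma1_Aopt}. Its $i^{\mathrm{th}}$ row is $\alpha_i^*\mathbf{v}$, so $\|\mathbf{A}_i^*\|=|\alpha_i^*|$ and the leakage constraints carry back unchanged, while $\mathbf{A}^*(\sqrt{\mathbf{w}_0})^T=(\boldsymbol{\alpha}^*)^T\big(\mathbf{v}(\sqrt{\mathbf{w}_0})^T\big)=\mathbf{0}$ by the orthogonality \eqref{eq:lemma1_v}. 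Since all rows are now parallel to the single direction $\mathbf{v}$, the triangle inequality of the previous step holds with equality: $(\mathbf{p}_1-\mathbf{p}_2)\mathbf{A}^*=\big(\sum_i (p_{1i}-p_{2i})\alpha_i^*\big)\mathbf{v}$, whose squared norm equals the objective of \eqref{eq:OrigProHP_approx_2} at $\boldsymbol{\alpha}^*$. Thus $\mathbf{A}^*$ is feasible with objective equal to the optimum of \eqref{eq:OrigProHP_approx_2}, supplying the matching lower bound and exhibiting the claimed rank-1 maximizer.

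I expect the conceptual content to reside entirely in recognizing this alignment phenomenon, with the rest being bookkeeping. The one point needing care is that the sign choice in $\alpha_i$ is what converts the triangle-inequality bound into an exact identity, so that the two problems share a common value rather than only a one-sided bound; the degenerate coordinates where $p_{1i}=p_{2i}$ cause no difficulty, since the sign is then irrelevant. It is also worth stating explicitly that aligning every row along a common $\mathbf{v}$ is exactly the equality condition for the triangle inequality, which is the structural reason the matrix optimizer collapses to rank one and why the output direction $\mathbf{v}$ may be chosen freely within the hyperplane orthogonal to $\sqrt{\mathbf{w}_0}$.
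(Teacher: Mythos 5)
Your proof is correct, and it takes a genuinely different route from the paper's. The paper first drops the constraint $\mathbf{A}(\sqrt{\mathbf{w}_0})^T=\mathbf{0}$, reparametrizes $\mathbf{A}$ by the row norms $|\alpha_i|=\|\mathbf{A}_i\|$ together with the matrix $\boldsymbol{\Omega}$ of pairwise cosines $\Omega_{ij}=\cos\angle(\mathbf{A}_i,\mathbf{A}_j)$, and argues that, the objective being linear in $\boldsymbol{\Omega}$ over the hypercube $|\Omega_{ij}|\leq 1$, the optimum sits at a corner with $|\Omega_{ij}^*|=1$ and consistent signs $\Omega_{ij}^*=\Omega_{1i}^*\Omega_{1j}^*$, forcing all rows parallel and hence a rank-1 maximizer; the dropped constraint is then restored by choosing the common direction $\mathbf{v}$ orthogonal to $\sqrt{\mathbf{w}_0}$. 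You instead prove a two-sided sandwich: the sign-matched assignment $\alpha_i=\mathrm{sign}(p_{1i}-p_{2i})\|\mathbf{A}_i\|$ plus the triangle inequality shows the optimal value of \eqref{eq:OrigProHP_approx_1} is at most that of \eqref{eq:OrigProHP_approx_2}, and the rank-1 construction $\mathbf{A}^*=(\boldsymbol{\alpha}^*)^T\mathbf{v}$, feasible because a unit $\mathbf{v}$ with $\mathbf{v}(\sqrt{\mathbf{w}_0})^T=0$ exists for $M\geq 2$ and exact because parallel rows make the triangle inequality tight, gives the reverse bound. Your version buys something concrete: the paper's claim that ``the feasible region of $\boldsymbol{\Omega}$ is a hypercube'' is really a relaxation, since not every matrix with unit diagonal and entries in $[-1,1]$ is realizable as a cosine matrix of $M$ vectors (realizability imposes a positive-semidefiniteness-type constraint); the paper's conclusion survives only because the optimal corner happens to be realized by parallel rows, whereas your triangle-inequality argument never needs to describe the feasible set of $\boldsymbol{\Omega}$ at all, and it additionally makes explicit that the two problems have \emph{equal} optimal values and that every optimizer of \eqref{eq:OrigProHP_approx_1} induces, via its row norms, an optimizer of \eqref{eq:OrigProHP_approx_2}. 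What the paper's route buys in exchange is an explicit account of the sign structure of the aligned rows (the $M-1$ free signs $\Omega_{1j}^*$), which your construction simply absorbs into the signs of the entries $\alpha_i^*$ returned by \eqref{eq:OrigProHP_approx_2}; the two descriptions are equivalent.
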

\begin{remark}
	Theorem \ref{Theorem:approximation_simplification} results from exploiting the structure of the QCQP problem in \eqref{eq:OrigProHP_approx_1} to further restrict the matrix $\mathbf{A}$ to a vector $\boldsymbol{\alpha}$ as shown in the sequel.
\end{remark}

\begin{theorem}\label{Theorem:optimalsol_HPapprox}
	An optimal \textit{privacy mechanism} $\mathbf{W}'$ for the E-IT approximation problem in \eqref{eq:OrigProHP_approx} is
	\begin{align}
	\label{eq:W_generation}
	\mathbf{W}'=\mathbf{W}_0+(\boldsymbol{\alpha}^*)^T\mathbf{v}\cdot\big[\sqrt{\mathbf{w}_0}\big]
	\end{align}
	where $\mathbf{W}_0$ is given by Proposition \ref{Proposition:approximation}, $\mathbf{v}$ is chosen to satisfy \eqref{eq:lemma1_v} and \eqref{eq:lemma1_vnorm}, and for $\lambda^*=\|\mathbf{p}_1-\mathbf{p}_2\|^2$ and $\mathbf{v}^*=\frac{\mathbf{p}_1-\mathbf{p}_2}{\|\mathbf{p}_1-\mathbf{p}_2\|}$ being the eigenvalue and eigenvector, respectively, of $(\mathbf{p}_1-\mathbf{p}_2)^T(\mathbf{p}_1-\mathbf{p}_2)$, the optimal solution, $\boldsymbol{\alpha}^*$, of \eqref{eq:OrigProHP_approx_2} is given as:
		\begin{enumerate}
			\item if only the first constraint in \eqref{eq:OrigProHP_approx_2} is active,
			\begin{align}
			\label{eq:const1_active}
			\frac{\mathbf{v}^*\Big[\frac{\mathbf{p}_2}{(\mathbf{p}_1)^2}\Big](\mathbf{v}^*)^T}{\mathbf{v}^*\big[(\mathbf{p}_1)^{-1}\big](\mathbf{v}^*)^T} < \frac{\epsilon_2}{\epsilon_1},
			\end{align}
			and the optimal solution $\boldsymbol{\alpha}^*$ is
			\begin{align}
			\label{eq:optalpha_const1}
			\boldsymbol{\alpha}^*=\pm\sqrt{\frac{2\epsilon_1}{\mathbf{v}^*\big[(\mathbf{p}_1)^{-1}\big](\mathbf{v}^*)^T}}\mathbf{v}^*\big[(\mathbf{p}_1)^{-1}\big];
			\end{align}
			\item if only the second constraint in \eqref{eq:OrigProHP_approx_2} is active,
			\begin{align}
			\label{eq:const2_active}
			\frac{\mathbf{v}^*\Big[\frac{\mathbf{p}_1}{(\mathbf{p}_2)^2}\Big](\mathbf{v}^*)^T}{\mathbf{v}^*\big[(\mathbf{p}_2)^{-1}\big](\mathbf{v}^*)^T} < \frac{\epsilon_1}{\epsilon_2},
			\end{align}
			and the optimal solution $\boldsymbol{\alpha}^*$ is
			\begin{align}
			\label{eq:optalpha_const2}
			\boldsymbol{\alpha}^*=\pm\sqrt{\frac{2\epsilon_2}{\mathbf{v}^*\big[(\mathbf{p}_2)^{-1}\big](\mathbf{v}^*)^T}}\mathbf{v}^*\big[(\mathbf{p}_2)^{-1}\big];
			\end{align}	
			\item when both constraints in \eqref{eq:OrigProHP_approx_2} are active, the optimal solution $\boldsymbol{\alpha}^*$ is
			\begin{align}
			\label{eq:optimalalpha}
			\boldsymbol{\alpha}^*=\pm\frac{\lambda^*}{2}\mathbf{v}^*\Big(\eta_1^*\big[\mathbf{p}_1\big]+\eta_2^*\big[\mathbf{p}_2\big]\Big)^{-1}
			\end{align}
			where $\eta_1^*>0$ and $\eta_2^*>0$ satisfy
			\begin{align}
			\label{eq:eta1} \mathbf{v}^*\big[\mathbf{p}_1\big]\Big(\eta_1^*\big[\mathbf{p}_1\big]+\eta_2^*\big[\mathbf{p}_2\big]\Big)^{-2}(\mathbf{v}^*)^T&=\frac{8\epsilon_1}{(\lambda^*)^2}\\
			\label{eq:eta2}
			\mathbf{v}^*\big[\mathbf{p}_2\big]\Big(\eta_1^*\big[\mathbf{p}_1\big]+\eta_2^*\big[\mathbf{p}_2\big]\Big)^{-2}(\mathbf{v}^*)^T&=\frac{8\epsilon_2}{(\lambda^*)^2}
			\end{align}
		\end{enumerate}	
\end{theorem}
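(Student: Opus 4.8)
The plan is to treat \eqref{eq:OrigProHP_approx_2} as a quadratically constrained quadratic program and solve it through its Karush--Kuhn--Tucker (KKT) conditions. The first observation I would exploit is that the objective matrix $(\mathbf{p}_1-\mathbf{p}_2)^T(\mathbf{p}_1-\mathbf{p}_2)$ has rank one, so the objective depends on $\boldsymbol{\alpha}$ only through the scalar $c:=(\mathbf{p}_1-\mathbf{p}_2)\boldsymbol{\alpha}^T$ and equals $\tfrac12 c^2$. Since $\mathbf{p}_1,\mathbf{p}_2$ have strictly positive entries, $[\mathbf{p}_1]$ and $[\mathbf{p}_2]$ are positive definite, so the feasible set is an intersection of two origin-centred ellipsoids, hence convex and symmetric under $\boldsymbol{\alpha}\mapsto-\boldsymbol{\alpha}$. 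Therefore maximizing the convex function $\tfrac12 c^2$ is equivalent to maximizing the linear functional $c$; this symmetry both explains the $\pm$ in every formula and, crucially, recasts the nominally hard convex-maximization as a genuine convex program for which the KKT conditions are necessary and sufficient for global optimality.

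Next I would write the Lagrangian with multipliers $\eta_1,\eta_2\ge 0$ and impose stationarity. Using the rank-one structure, this collapses to
\[
\big(\eta_1[\mathbf{p}_1]+\eta_2[\mathbf{p}_2]\big)\boldsymbol{\alpha}^T \;\propto\; (\mathbf{p}_1-\mathbf{p}_2)^T \;=\; \sqrt{\lambda^*}\,(\mathbf{v}^*)^T,
\]
where $\lambda^*=\|\mathbf{p}_1-\mathbf{p}_2\|^2$ and $\mathbf{v}^*=(\mathbf{p}_1-\mathbf{p}_2)/\|\mathbf{p}_1-\mathbf{p}_2\|$ are the single nonzero eigenpair of the objective matrix. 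Because any nonnegative combination $\eta_1[\mathbf{p}_1]+\eta_2[\mathbf{p}_2]$ with at least one positive multiplier is a positive definite diagonal matrix, it is invertible and I can solve $\boldsymbol{\alpha}^T\propto\big(\eta_1[\mathbf{p}_1]+\eta_2[\mathbf{p}_2]\big)^{-1}(\mathbf{v}^*)^T$. Fixing the overall scale so that the coefficient is $\tfrac{\lambda^*}{2}$ yields the announced form \eqref{eq:optimalalpha}, and assembling $\mathbf{W}'$ via \eqref{eq:W_generation} together with Theorem \ref{Theorem:approximation_simplification} recovers the mechanism once $\boldsymbol{\alpha}^*$ is known.

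The bulk of the argument is then a case analysis driven by complementary slackness. When only the first constraint is active I set $\eta_2=0$, so stationarity gives $\boldsymbol{\alpha}^*\propto\mathbf{v}^*[(\mathbf{p}_1)^{-1}]$; pinning the magnitude by saturating $\tfrac12\boldsymbol{\alpha}[\mathbf{p}_1]\boldsymbol{\alpha}^T=\epsilon_1$ produces \eqref{eq:optalpha_const1}, and substituting this $\boldsymbol{\alpha}^*$ into the inactive second constraint $\tfrac12\boldsymbol{\alpha}[\mathbf{p}_2]\boldsymbol{\alpha}^T\le\epsilon_2$ reduces---after using $[(\mathbf{p}_1)^{-1}][\mathbf{p}_2][(\mathbf{p}_1)^{-1}]=\big[\frac{\mathbf{p}_2}{(\mathbf{p}_1)^2}\big]$---to exactly the regime condition \eqref{eq:const1_active}. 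The symmetric choice $\eta_1=0$ gives \eqref{eq:optalpha_const2} and \eqref{eq:const2_active}. Finally, when both constraints bind I substitute $\boldsymbol{\alpha}^*=\tfrac{\lambda^*}{2}\mathbf{v}^*\big(\eta_1[\mathbf{p}_1]+\eta_2[\mathbf{p}_2]\big)^{-1}$ into both equalities $\tfrac12\boldsymbol{\alpha}[\mathbf{p}_k]\boldsymbol{\alpha}^T=\epsilon_k$; since the diagonal matrices commute, each equality collapses to $\mathbf{v}^*[\mathbf{p}_k]\big(\eta_1[\mathbf{p}_1]+\eta_2[\mathbf{p}_2]\big)^{-2}(\mathbf{v}^*)^T=8\epsilon_k/(\lambda^*)^2$, i.e. \eqref{eq:eta1}--\eqref{eq:eta2}.

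The step I expect to be the main obstacle is the both-active case: showing that the coupled nonlinear system \eqref{eq:eta1}--\eqref{eq:eta2} admits a solution with $\eta_1^*,\eta_2^*>0$, and that the three regimes are mutually exclusive and jointly exhaustive. I would establish this by a monotonicity argument on the map sending $(\eta_1,\eta_2)$ to the left-hand sides of \eqref{eq:eta1}--\eqref{eq:eta2}: each left-hand side is positive and homogeneous of degree $-2$, so it is decreasing along rays and the \emph{ratio} of the two left-hand sides---which must equal $\epsilon_1/\epsilon_2$---depends only on the direction $\eta_1:\eta_2$ and varies monotonically as this direction sweeps from $\eta_2\to 0$ to $\eta_1\to 0$. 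Its two endpoint values are precisely the quantities appearing (up to reciprocal) in \eqref{eq:const1_active} and \eqref{eq:const2_active}. Hence the prescribed ratio is attainable by a direction with strictly positive multipliers exactly when neither single-constraint regime condition holds; this simultaneously certifies existence in the interior case and shows the three cases partition the parameter space, after which the common scale is fixed by either equation. Routine verification of dual feasibility ($\eta_k^*\ge 0$) in each case completes the proof.
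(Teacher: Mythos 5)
Your proposal is correct and takes essentially the same route as the paper: the paper's Lemma~\ref{lemma_equivalentOS} performs exactly your rank-one/symmetry reduction of \eqref{eq:OrigProHP_approx_2} to a convex program with linear objective $\frac{1}{2}\lambda^*\boldsymbol{\alpha}(\mathbf{v}^*)^T$, and its Lemma~\ref{lamma_optimalalpha} carries out the same KKT stationarity and complementary-slackness case analysis (under Slater's qualification, with $\boldsymbol{\alpha}=\mathbf{0}$ strictly feasible), the regime conditions \eqref{eq:const1_active} and \eqref{eq:const2_active} arising, as you note, from the inactive constraint. The one place you go beyond the paper is your final monotonicity argument for the existence of $\eta_1^*,\eta_2^*>0$ and the exhaustiveness of the three cases---the paper merely verifies \eqref{eq:eta1}--\eqref{eq:eta2} by substitution---and that extra argument is sound, since the ratio of the left-hand sides of \eqref{eq:eta1} and \eqref{eq:eta2} is a weighted average of the ratios $p_{1i}/p_{2i}$ that is strictly monotone in $\eta_2/\eta_1$ with endpoint values matching the thresholds in \eqref{eq:const1_active} and \eqref{eq:const2_active}.
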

\begin{remark}

\begin{figure}
\centering
\includegraphics[width = \columnwidth]{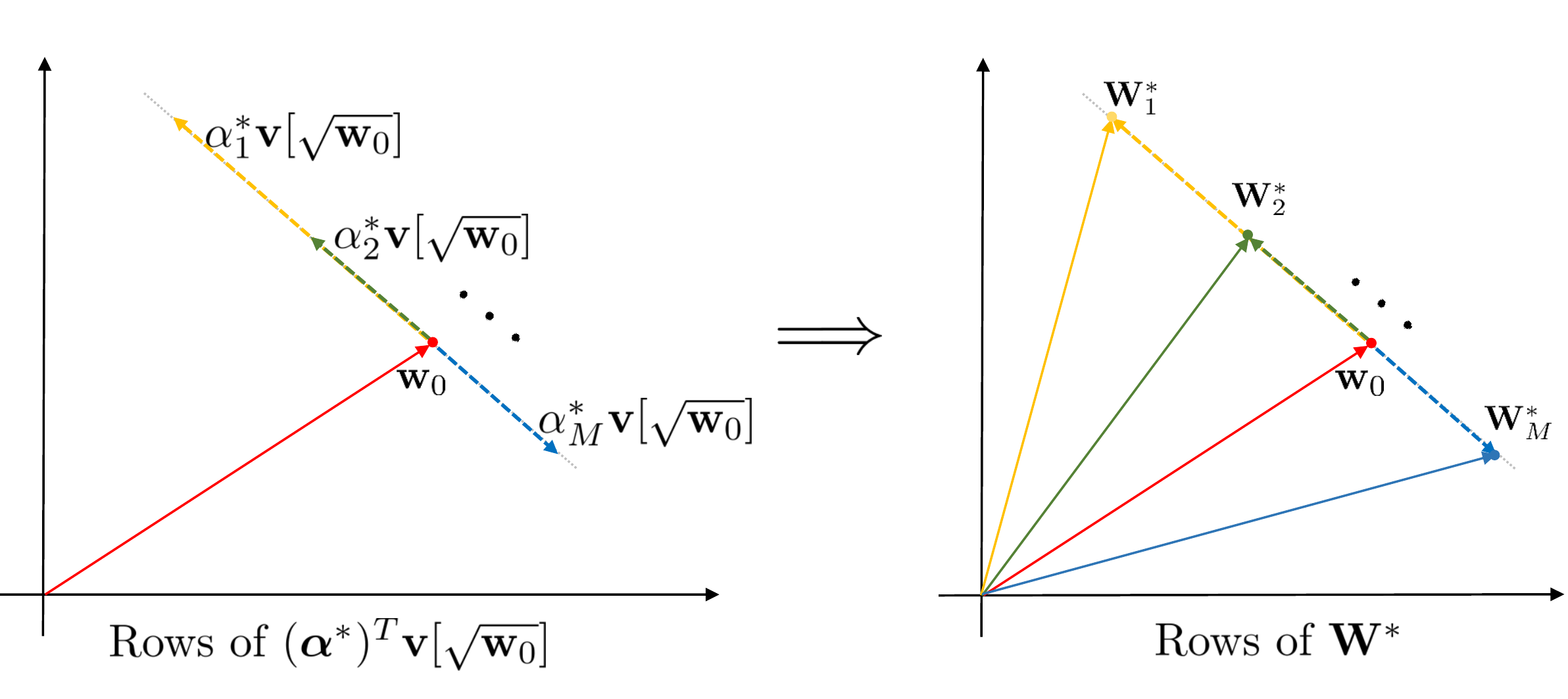}
\caption{Illustration of  \eqref{eq:W_generation}. The ``nominal'' row vector $\mathbf{w}_0$ is perturbed in different directions depending on $\boldsymbol{\alpha}^*$ and $\mathbf{v}$.}
\label{fig:interp}
\end{figure}
	Note that the optimal mechanism $\mathbf{W}'=\mathbf{W}_0+(\boldsymbol{\alpha}^*)^T\mathbf{v}\big[\sqrt{\mathbf{w}_0}\big]$, via $\boldsymbol{\alpha}^*$, captures the fact that a statistical privacy metric, such as mutual information, takes into consideration the source distribution in designing the (perturbation $\boldsymbol{\Theta}^*$) mechanism. In fact, the solutions for $\boldsymbol{\alpha}^*$ in \eqref{eq:optalpha_const1}, \eqref{eq:optalpha_const2} and \eqref{eq:optimalalpha} capture this through the term $\big(\eta_1^*[\mathbf{p}_1]+\eta_2^*[\mathbf{p}_2]\big)^{-1}$. The vector $\mathbf{v}^*$ indicates the direction along which the objective function, i.e., the relative entropy, grows fastest. See Fig.~\ref{fig:interp} for an illustration of the result in Theorem \ref{Theorem:optimalsol_HPapprox}.
	Note that from \eqref{eq:optalpha_const1}, \eqref{eq:optalpha_const2} and \eqref{eq:optimalalpha} for a uniformly distributed source, i.e., the entries of $\mathbf{p}_1$ or $\mathbf{p}_2$ are $\frac{1}{M}$, all entries
	of $\mathbf{v}^*$ have the same scaling such that $\boldsymbol{\alpha}^*$ is in the direction of $\mathbf{v}^*$. However, for a source with non\--uniform distributed samples, the samples with low probabilities affect the direction of $\mathbf{v}^*\big(\eta_1^*[\mathbf{p}_1]+\eta_2^*[\mathbf{p}_2]\big)^{-1}$ the most. This is a direct consequence of the statistical leakage metric which causes the optimal mechanism to minimize information leakage by perturbing the low probability (more informative) samples proportionately more relative to the higher probability samples. 
\end{remark}

\section{PROOFS OF THE MAIN RESULTS}
In this section, we develop the E-IT approximation of the utility-privacy tradeoff problem in \eqref{eq:OrigProHP} and present the proofs for Proposition \ref{Proposition:approximation}, Theorem \ref{Theorem:approximation_simplification} and Theorem \ref{Theorem:optimalsol_HPapprox} in Section \ref{section:main_results}. 

To develop an approximation using Taylor series, we select an operating point which will be perturbed in order to provide an approximately optimal privacy mechanism. Since our focus is on the high privacy regime $\epsilon_1,\epsilon_2\in [0,\epsilon^*], \epsilon^* \ll \min(H(\mathbf{p}_1),H(\mathbf{p}_2))$, we present the approximation around a perfect privacy operation point, i.e., a \textit{privacy mechanism} $\mathbf{W}_0$ that achieves $\epsilon_1=\epsilon_2=0$ leakage. The following lemma summarizes the set of all mechanisms that achieve perfect privacy. 

\begin{lemma}\label{lemma:perfect_privacy}
	For perfect privacy, i.e., $\epsilon_1=\epsilon_2=0$, the \textit{privacy mechanism} $\mathbf{W}_0$ is a rank-1 matrix with every row being equal to a row vector $\mathbf{w}_0$ where $\mathbf{w}_0$ belongs to probability simplex, such that the entries $w_{0j},j\in \{1,2,\ldots,N\}$ of the vector $\mathbf{w}_0$ satisfy
	\begin{align}
	&\sum_{j=1}^{N}w_{0j}=1,\qquad \mathrm{and}\\
	&w_{0j} \geq 0 \quad \forall j\in \{1, \ldots, N\} .
	\end{align}
\end{lemma}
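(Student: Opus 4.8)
The plan is to characterize perfect privacy through the independence of input and output. The natural starting point is the decomposition
\begin{equation}
I(\mathbf{p}_k,\mathbf{W}_0)=\sum_{i=1}^{M} p_{ki}\, D\big(\mathbf{W}_{0,i}\,\|\,\mathbf{p}_k\mathbf{W}_0\big),
\end{equation}
which expresses the mutual information as a $\mathbf{p}_k$-weighted average of the relative entropies between each row $\mathbf{W}_{0,i}$ of the mechanism and the induced output distribution $\mathbf{p}_k\mathbf{W}_0$. Since relative entropy is nonnegative and the weights $p_{ki}$ are nonnegative, this sum vanishes if and only if every term with positive weight vanishes, i.e.\ $\mathbf{W}_{0,i}=\mathbf{p}_k\mathbf{W}_0$ for every $i$ in the support of $\mathbf{p}_k$. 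This is just the statement that $I(X;\hat{X})=0$ is equivalent to $X\perp\hat{X}$, so the conditional law $\mathbf{W}_{0,i}$ of $\hat{X}$ given $X=i$ cannot depend on $i$ over the support.

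Next I would invoke the standing assumption that both source distributions have full support (as is implicit in the hypothesis-testing model and in the later use of the diagonal matrices $\big[(\mathbf{p}_k)^{-1}\big]$), so that $p_{ki}>0$ for all $i$. Imposing $I(\mathbf{p}_k,\mathbf{W}_0)=0$ then forces \emph{every} row of $\mathbf{W}_0$ to equal the common vector $\mathbf{p}_k\mathbf{W}_0$; denoting this common row by $\mathbf{w}_0$ shows that $\mathbf{W}_0$ is rank-$1$ with identical rows. Because $\mathbf{w}_0$ is a row of a conditional-probability (row-stochastic) matrix, its entries are automatically nonnegative and sum to one, giving $\sum_{j=1}^{N} w_{0j}=1$ and $w_{0j}\ge 0$, which is exactly the claimed simplex membership.

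For the converse direction I would verify that any such rank-$1$ matrix does achieve perfect privacy. If every row of $\mathbf{W}_0$ equals $\mathbf{w}_0$, then $\mathbf{p}_k\mathbf{W}_0=\mathbf{w}_0$ for either $k$ (since $\mathbf{p}_k$ sums to one), each term $D(\mathbf{w}_0\|\mathbf{w}_0)=0$, and hence $I(\mathbf{p}_1,\mathbf{W}_0)=I(\mathbf{p}_2,\mathbf{W}_0)=0$. Together with the previous paragraph this establishes that the described family is precisely the set of perfect-privacy mechanisms.

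The step I expect to require the most care is the reverse implication, specifically its reliance on full support: without $p_{ki}>0$ for every $i$, rows lying outside the supports of both $\mathbf{p}_1$ and $\mathbf{p}_2$ would be unconstrained, and one would additionally need the two support sets to overlap and jointly cover $\{1,\dots,M\}$ in order to pin all rows to a single vector. I would therefore state the full-support assumption explicitly, and note that under it a single constraint $I(\mathbf{p}_k,\mathbf{W}_0)=0$ already collapses all rows to $\mathbf{w}_0$, so that requiring both constraints to vanish is consistent and yields the same unique structure.
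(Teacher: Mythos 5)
Your proof is correct and takes essentially the same route as the paper: the paper applies the log-sum inequality row by row to show $I(\mathbf{p},\mathbf{W})\ge 0$ with equality iff $W_{ij}=\sum_k p_k W_{kj}$ for all $i,j$, which is precisely your decomposition $I(\mathbf{p}_k,\mathbf{W}_0)=\sum_i p_{ki}\,D(\mathbf{W}_{0,i}\,\|\,\mathbf{p}_k\mathbf{W}_0)$ together with nonnegativity of relative entropy. Your explicit treatment of the full-support caveat (and the trivial converse) is a sound refinement of a point the paper leaves implicit, not a different method.
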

\begin{proof}
	For any probability distribution $\mathbf{p}$ with entries $p_i, \,i\in \{1,\ldots,M\}$, and a \textit{privacy mechanism} $\mathbf{W}$, the mutual information $I(\mathbf{p},\mathbf{W})$ can be written as
	\begin{align}
 I(\mathbf{p},\mathbf{W})
	&= \sum_{i=1}^{M}\sum_{j=1}^{N}p_iW_{ij}\log\frac{W_{ij}}{\sum_{i=1}^{M}p_iW_{ij}}\\
	\label{eq:perfectprivacy_logInequality}
&	\geq  \sum_{i=1}^{M}p_i\bigg(\sum_{j=1}^{N}W_{ij}\bigg)\log\frac{\sum_{j=1}^{N}W_{ij}}{\sum_{j=1}^{N}\sum_{i=1}^{M}p_iW_{ij}}\\
	&= \sum_{i=1}^{M}p_i\sum_{j=1}^{N}W_{ij}\log\frac{1}{1}=0
	\end{align}
	where \eqref{eq:perfectprivacy_logInequality} results from the log-sum inequality. Equality in \eqref{eq:perfectprivacy_logInequality} holds if and only if \cite[Theorem 2.7.1]{IT_Cover}
	\begin{align}
	\label{eq:perfectprivacy_W}
	W_{ij}=\sum_{k=1}^{M}p_kW_{kj},\;\; i\in \{1, \ldots, M\} ,j\in \{1, \ldots, N\}.
	\end{align}
	In other words, perfect privacy, i.e., zero leakage, is achieved when every row of the optimal mechanism $\mathbf{W}_0$ is the same and is equal to the probability distribution $\mathbf{w}_0=\mathbf{p}\mathbf{W}_0$. 
\end{proof}
Thus, for the perfect privacy setting,
the optimal mechanism satisfying \eqref{eq:perfectprivacy_W} is a perfectly noisy mechanism, such that the input distribution has no effect on the output. 
\begin{remark}
	Note that, for any $\mathbf{W}_0$ satisfying \eqref{eq:perfectprivacy_W} that achieves perfect privacy, the utility is $D(\mathbf{p}_1\mathbf{W}_0 \| \mathbf{p}_2\mathbf{W}_0)=0$. Furthermore, the rows of $\mathbf{W}_0$, i.e., $\mathbf{w}_0$, can take any value in an $M$\--- dimensional probability simplex, i.e., $\mathbf{w}_0$ can have any $k\in \{1,\ldots,M\}$ non-zero columns. This implies that perfect privacy, and hence, zero utility, can be achieved for any cardinality of the output from $1$ to $M$.
\end{remark}

\subsection{Proof of Proposition \ref{Proposition:approximation} }
Proposition \ref{Proposition:approximation} presents an E-IT approximation of the utility-privacy tradeoff problem \eqref{eq:OrigProHP} in the high privacy regime where the \textit{privacy mechanism} $\mathbf{W}$ is modeled in \eqref{eq:Theta_to_W}-\eqref{eq:Con_Theta2}. The proof of the proposition is as follows. 

\begin{proof}
	Consider the high privacy regime in which $0\leq\epsilon_k \ll \min(H(\mathbf{p}_1),H(\mathbf{p}_2)),\, k \in \{1,2\}$. In this regime, for a feasible $(\epsilon_1,\epsilon_2)$ pair the \textit{privacy mechanism} $\mathbf{W}$ can be written as a perturbation of $\mathbf{W}_0$ via
	\begin{align}
	\label{eq:Theta_to_W}
	\mathbf{W}=\mathbf{W}_0+\boldsymbol{\Theta}
	\end{align}
	where $\mathbf{W}_0$ is a mechanism achieving perfect privacy (i.e., satisfies \eqref{eq:perfectprivacy_W}) with all rows equal to $\mathbf{w}_0$, where $\mathbf{w}_0$ is chosen such that its entries $w_{0j}\neq 0, \forall j \in \{1,\ldots,N\}$, and $\mathbf{\Theta}$ is a matrix with
	\begin{align}
	\label{eq:Con_Theta1}
	 \sum_{j=1}^{N}\Theta_{ij}&=0,   \qquad \,\,\forall\,  i\in \{1, \ldots, M\} \\
	\label{eq:Con_Theta2}
	 |\Theta_{ij}&| \leq \rho w_{0j}  \quad \forall\,  i\in \{1, \ldots, M\} ,j\in \{1, \ldots, N\} .
	\end{align}
	where the radius  of the neighborhood around $\mathbf{w}_0$ is~$\rho \!\in\! [0,1)$.
	
	Note that \eqref{eq:Con_Theta1} results from the requirement that $\mathbf{W}$ is stochastic, i.e., the rows of $\mathbf{W}$ sum to 1. Since the rows of $\mathbf{W}_0$ also sum to 1, we obtain \eqref{eq:Con_Theta1}. The constraint in \eqref{eq:Con_Theta2} captures the fact that approximating about a perfect privacy achieving mechanism   requires restricting the entries of the perturbation matrix $\boldsymbol{\Theta}$ to be within a fraction $\rho$ of~$\mathbf{w}_0$.
	
	The perturbation modeled in \eqref{eq:Theta_to_W}-\eqref{eq:Con_Theta2} implies that every row in $\mathbf{W}$ as well as the output distribution $\mathbf{p}_1\mathbf{W}$ and $\mathbf{p}_2\mathbf{W}$ are in a neighborhood about $\mathbf{w}_0$ given by
	\begin{subequations}
		\begin{align}
		\label{eq:OrigProbHP_ApproxCond3}
		&\Big|W_{ij}-w_{0j}\Big|\leq\rho w_{0j}, \\
		\label{eq:OrigProbHP_ApproxCond1}
		&\Big|\big(\mathbf{p}_1\mathbf{W}\big)_j-w_{0j}\Big|\leq\rho w_{0j},  \\
		\label{eq:OrigProbHP_ApproxCond2}
		&\Big|\big(\mathbf{p}_2\mathbf{W}\big)_j-w_{0j}\Big|\leq\rho w_{0j},
		\end{align}
	\end{subequations}
	for all $i \in \{1, \ldots, M\}$ and $j \in\{1, \ldots, N\}$.  
In this neighborhood, we can approximate the relative entropy using Taylor series about $\mathbf{W}_0$ as
\begin{align}
& D(\mathbf{p}_1\mathbf{W} \| \mathbf{p}_2\mathbf{W}) \nonumber
\label{eq:ApproximateD_step1} \\ &= \frac{1}{2}\sum_{j=1}^{N}\frac{\big(\sum_{i=1}^{M}p_{1i}\Theta_{ij}-\sum_{i=1}^{M}p_{2i}\Theta_{ij}\big)^2}{\sum_{i=1}^{M}p_{1i}W_{ij}}\nonumber\\*
&\qquad +o\bigg(\Big\|(\mathbf{p}_1-\mathbf{p}_2)\boldsymbol{\Theta}\big[(\mathbf{p}_1\mathbf{W})^{-\frac{1}{2}}\big]\Big\|^2_\infty\bigg)\\
\label{eq:ApproximateD_step2}
&\approx   \frac{1}{2}\Big\|(\mathbf{p}_1-\mathbf{p}_2)\boldsymbol{\Theta}\big[(\mathbf{p}_1\mathbf{W})^{-\frac{1}{2}}\big]\Big\|^2\\
\label{eq:ApproximateD_step3}
&\approx  \frac{1}{2}\big\|(\mathbf{p}_1-\mathbf{p}_2)\boldsymbol{\Theta}[(\mathbf{w}_0)^{-\frac{1}{2}}]\big\|^2,
\end{align}
where \eqref{eq:ApproximateD_step2} results from approximating the relative entropy by the $\chi^2$-divergence term \cite[Theorem 4.1]{ITSt_Tutorial}, and \eqref{eq:ApproximateD_step3} results from applying the neighborhood condition in \eqref{eq:OrigProbHP_ApproxCond1}. 

Similarly, one can approximate the mutual information between the source class $k\in\{1,2\}$ and its output as 
\begin{align}
& I(\mathbf{p}_k,\mathbf{W}) \nonumber\\
\label{eq:ApproximateI_step1}
&=\frac{1}{2}\sum_{i=1}^{M}p_{ki}\bigg(\sum_{j=1}^{N}\frac{\big(W_{ij}-\sum_{i=1}^{M}p_{ki}W_{ij}\big)^2}{W_{ij}} \nonumber\\
&\qquad  + o\Big(\Big\|\big(\mathbf{W}_i-\mathbf{p}_k\mathbf{W}\big)\big[(\mathbf{W}_i)^{-\frac{1}{2}}\big]\Big\|^2_\infty\Big)\bigg)\\
\label{eq:ApproximateI_step2}
&\approx  \frac{1}{2}\sum_{i=1}^{M}p_{ki}\Big\|\big(\mathbf{W}_i-\mathbf{p}_k\mathbf{W}\big)\big[(\mathbf{W}_i)^{-\frac{1}{2}}\big]\Big\|^2 \\
\label{eq:ApproximateI_step3}
&\approx \frac{1}{2}\sum_{i=1}^{M}p_{ki}\big\|\big(\mathbf{w}_0+\boldsymbol{\Theta}_{i}-\mathbf{p}_k\mathbf{W}\big)[(\mathbf{w}_0)^{-\frac{1}{2}}]\big\|^2 \\
\label{eq:ApproximateI_step4}
&\approx \frac{1}{2}\sum_{i=1}^{M}p_{ki}\big\|\boldsymbol{\Theta}_{i}[(\mathbf{w}_0)^{-\frac{1}{2}}]\big\|^2,
\end{align}
where $\mathbf{W}_{i}$ and $\boldsymbol{\Theta}_{i}$ are the $i^{\mathrm{th}}$ rows of $\mathbf{W}$ and $\boldsymbol{\Theta}$, respectively.  With these E-IT approximations of the relative entropy and the mutual information, the utility-privacy tradeoff problem in \eqref{eq:OrigProHP} is approximated as \eqref{eq:OrigProHP_approx}. Let $\mathbf{A}$ be a matrix with entries
 \begin{align}
 \label{eq:A_to_Theta}
 A_{ij}=\frac{\Theta_{ij}}{\sqrt{w_{0j}}},\qquad  i\in \{1, \ldots, M\} ,j \in \{1, \ldots, N\} ,
 \end{align} 
 From \eqref{eq:Con_Theta1}, we have that $\mathbf{A}(\sqrt{\mathbf{w}_0})^T=\mathbf{0}$, where $\sqrt{\mathbf{w}_0}$ is the vector whose entries are the square root of the entries of $\mathbf{w}_0$. Thus,  the approximation in~\eqref{eq:OrigProHP_approx} leads to~\eqref{eq:OrigProHP_approx_1}.
\end{proof}
\begin{remark}
The  row vector $\mathbf{w}_0$ can have any $k\in\{1,\ldots,M\}$ non-zero columns, i.e., it can be any point in an $M$\---dimensional simplex including extremal/corner points. From \eqref{eq:Con_Theta2}, this implies that the columns of $\boldsymbol{\Theta}$ that are non-zero are those for which $w_{0j}$ are non-zero. Without loss of generality, we assume that $M=N$ and that $\mathbf{w}_0$ is chosen in the interior of the simplex. The results apply for any $N<M$.
\end{remark} 

\subsection{Proof of Theorem \ref{Theorem:approximation_simplification}}
Note that in the optimization problem \eqref{eq:OrigProHP_approx_1} the first two constraints only depend on the Euclidean norm of the rows of $\mathbf{A}$. On the other hand, $\mathbf{A}\mathbf{A}^T$ in the objective function includes all possible inner products of any two rows of $\mathbf{A}$. Based on the special structure, we can simplify the problem in \eqref{eq:OrigProHP_approx_1} to obtain the form \eqref{eq:OrigProHP_approx_2} in Theorem \ref{Theorem:approximation_simplification}. 

\begin{proof}
	Consider the following optimization problem obtained from \eqref{eq:OrigProHP_approx_1} without the constraint $\mathbf{A}(\sqrt{\mathbf{w}_0})^T=\mathbf{0}$:
	\begin{equation}\label{eq:OrigProHP_approx_1a}
	\begin{aligned}
	\max_{\substack{\mathbf{A}}}\quad & \sum_{i,j=1}^{M}\frac{1}{2}(\mathbf{p}_1-\mathbf{p}_2)_i(\mathbf{p}_1-\mathbf{p}_2)_j\mathbf{A}_i\mathbf{A}_j^T\\
	\mathrm{s.t.} \quad &\frac{1}{2}\sum_{i=1}^{M}p_{ki}\mathbf{A}_i\mathbf{A}_i^T\leq \epsilon_k \quad k=1,2\\
	\end{aligned}
	\end{equation}	
	Let $\Delta_i \triangleq (\mathbf{p}_1-\mathbf{p}_2)_i$, $i\in\{1,\ldots,M\}$. Furthermore, let $\boldsymbol{\alpha}$ be a row vector with entries $\alpha_i$ for all $i$ that $|\alpha_i|$ is the Euclidean norm of the $i^{\mathrm{th}}$ row $\mathbf{A}_i$ of $\mathbf{A}$. Let $\boldsymbol{\Omega}$ denote the symmetric matrix of the cosines of angles between the rows of $\mathbf{A}$, such that its entries $\Omega_{ij}\triangleq\cos\angle(\mathbf{A}_i,\mathbf{A}_j)$, $i,j\in \{1,\ldots,M\}$, with $|\Omega_{ij}|\leq 1$ for $i\neq j$ and $\Omega_{ij}=1$ for $i=j$. Rewriting \eqref{eq:OrigProHP_approx_1a} with these variables, we have
	\begin{equation}\label{eq:Vari_Mat2Vec1}
	\begin{aligned}
	\max_{\substack{\boldsymbol{\alpha},\boldsymbol{\Omega}}}\quad & \sum_{i=1}^{M}\sum_{j=1}^{M}\frac{1}{2}\Delta_i\Delta_j|\alpha_i \| \alpha_j|\Omega_{ij}\\
	\mathrm{s.t.} \quad 
	&\Omega_{ii}= 1 \\
	&|\Omega_{ij}|\leq 1 \quad i\neq j \in \{1, \ldots, M\} \\
	 & \frac{1}{2}\sum_{i=1}^{M}p_{ki}\alpha_i^2\leq \epsilon_k, \quad k=1,2\\
	\end{aligned}
	\end{equation}
Consider first the optimization over $\boldsymbol{\Omega}$. Since the objective   is linear in $\boldsymbol{\Omega}$ and the feasible region of $\boldsymbol{\Omega}$ is a hypercube, \eqref{eq:Vari_Mat2Vec1} is a linear program whose optimal solution is at one of the extreme points of the hypercube \cite[Theorem 3.5.3]{Nonlinear_Programming}, i.e., the optimal solution $\mathbf{\Omega}^*$ has entries $|\Omega_{i,j}^*|=1$ for all $i,j$. Thus, all the rows of an $\mathbf{A}$ maximizing~\eqref{eq:OrigProHP_approx_1a} are parallel, and therefore, the optimal solution $\mathbf{A}^*$ of~\eqref{eq:OrigProHP_approx_1a} is a rank-1 matrix.

In addition, from the objective function in \eqref{eq:Vari_Mat2Vec1}, if the signs of $\Delta_1\Delta_i$ and $\Delta_1\Delta_j$ are known for any $i,j\in \{2,\ldots, M\}$, the sign of $\Delta_i\Delta_j$ can be determined. Furthermore, maximizing the objective requires that $\Omega_{ij}$ has the same sign as its coefficient $\Delta_i\Delta_j$. Therefore, $\Omega^*_{ij}=\Omega^*_{1i}\Omega^*_{1j}$ for all $i,j\in\{2,\ldots,M\}$, i.e., $\boldsymbol{\Omega}^*$ has only $M-1$ independent entries $\Omega^*_{1j}$, $j\in \{2,\ldots,M\}$ with $\Omega_{ii}^*=1$ for all $i$.

Thus, we see that the optimization in \eqref{eq:Vari_Mat2Vec1} depends on only $M$ values of $|\alpha_i|$, $i\in \{1,\ldots,M\}$, and $M-1$ signs of $\Omega^*_{1i}$, $i\in \{2,\ldots,M\}$. Let $\mathbf{v}$ denote a unit norm vector with no zero entry, 
and $\theta_i^*$ for all $i$ represent the direction of $i^{\mathrm{th}}$ row of $\mathbf{A}$ with respect to $\mathbf{v}$, such that $\Omega^*_{ij}=\theta_i^*\theta_j^*$, and the $i^{\mathrm{th}}$ row of $\mathbf{A}$ can be written as $\mathbf{A}_i=\theta_i^*|\alpha_i|\mathbf{v}=\alpha_i\mathbf{v}$. 
The optimization in \eqref{eq:Vari_Mat2Vec1} can now be written as a function of the vector $\boldsymbol{\alpha}$ as
	\begin{equation}\label{eq:Vari_Mat2Vec2}
	\begin{aligned}
		\max_{\boldsymbol{\alpha}}\quad & \sum_{i=1}^{M}\sum_{j=1}^{M}\frac{1}{2}(\mathbf{p}_1-\mathbf{p}_2)_i(\mathbf{p}_1-\mathbf{p}_2)_j\alpha_i\alpha_j\\
		\mathrm{s.t.} \quad 
		& \frac{1}{2}\sum_{i=1}^{M}p_{1i}\alpha_i^2\leq \epsilon_1\\
		&\frac{1}{2}\sum_{i=1}^{M}p_{2i}\alpha_i^2 \leq \epsilon_2.
	\end{aligned}
	\end{equation}
	The optimal solution $\mathbf{A}^*$ of \eqref{eq:OrigProHP_approx_1a} is then obtained from the $\boldsymbol{\alpha}^*$ optimizing \eqref{eq:Vari_Mat2Vec2} as
\begin{align}
\label{eq:optimal_A}
\mathbf{A}^*=(\boldsymbol{\alpha}^*)^T\mathbf{v}.
\end{align}
Note that the optimal solution in \eqref{eq:Vari_Mat2Vec2} will yield both the magnitude and sign of $\alpha_i$ for all $i$.
The $\mathbf{v}$ in \eqref{eq:optimal_A} can be chosen to satisfy
\begin{align}
\label{eq:optimalA_ORTH}
	\mathbf{A}^*(\sqrt{\mathbf{w}_0})^T=(\boldsymbol{\alpha}^*)^T\mathbf{v}(\sqrt{\mathbf{w}_0})^T=\mathbf{0}.
\end{align}
Thus, solving for $\boldsymbol{\alpha}^*$ in \eqref{eq:OrigProHP_approx_2} leads to $\mathbf{A}^*$ in~\eqref{eq:OrigProHP_approx_1} using \eqref{eq:optimal_A} and \eqref{eq:optimalA_ORTH}.
\end{proof}
\begin{remark}
Knowing the optimal solution $\mathbf{A}^*$ of \eqref{eq:OrigProHP_approx_1}, we can obtain the optimal solution $\boldsymbol{\Theta}^*$ of the E-IT approximation problem \eqref{eq:OrigProHP_approx} as $\boldsymbol{\Theta}^*=\mathbf{A}^*\big[\sqrt{\mathbf{w}_0}\big]$ directly from~\eqref{eq:A_to_Theta}. 
\end{remark}

\subsection{Related Lemmas}
To prove Theorem \ref{Theorem:optimalsol_HPapprox}, we use the simplification in Theorem \ref{Theorem:approximation_simplification} along with two lemmas. The problem in \eqref{eq:OrigProHP_approx_2} maximizes a convex function over a convex feasible region, and thus, it is not a convex program. However, we show how the problem can be related to a convex program and we also obtain a closed-form  solution via the following two lemmas. 

\begin{lemma}\label{lemma_equivalentOS}
		The following convex program completely determines the solutions of \eqref{eq:OrigProHP_approx_2}, 
		\begin{equation}\label{eq:OrigProHP_approx_3simp}
		\begin{aligned}
		\max_{\substack{\boldsymbol{\alpha}}}\quad & 
		\frac{1}{2}\lambda^*\boldsymbol{\alpha}(\mathbf{v}^*)^T\\
		\mathrm{s.t.} \quad &\frac{1}{2}\boldsymbol{\alpha}\big[\mathbf{p}_1\big]\boldsymbol{\alpha}^T\leq \epsilon_1\\
		&\frac{1}{2}\boldsymbol{\alpha}\big[\mathbf{p}_2\big]\boldsymbol{\alpha}^T\leq \epsilon_2,
		\end{aligned}
		\end{equation}
		such that the optimal solutions of \eqref{eq:OrigProHP_approx_2} are $\pm\boldsymbol{\alpha}^*$ where $\boldsymbol{\alpha}^*$ is the optimal solution of \eqref{eq:OrigProHP_approx_3simp}.
\end{lemma}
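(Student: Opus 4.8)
The plan is to exploit the rank-one structure of the quadratic form in the objective of \eqref{eq:OrigProHP_approx_2} and to reduce the maximization of a squared linear functional over a centrally symmetric convex set to the maximization of that linear functional itself. First I would rewrite the objective. Since $(\mathbf{p}_1-\mathbf{p}_2)^T(\mathbf{p}_1-\mathbf{p}_2)$ is a rank-one matrix whose only nonzero eigenvalue is $\lambda^*=\|\mathbf{p}_1-\mathbf{p}_2\|^2$ with unit eigenvector $\mathbf{v}^*=(\mathbf{p}_1-\mathbf{p}_2)/\|\mathbf{p}_1-\mathbf{p}_2\|$, we have $\mathbf{p}_1-\mathbf{p}_2=\sqrt{\lambda^*}\,\mathbf{v}^*$, so that
\[
\frac{1}{2}\boldsymbol{\alpha}(\mathbf{p}_1-\mathbf{p}_2)^T(\mathbf{p}_1-\mathbf{p}_2)\boldsymbol{\alpha}^T
=\frac{1}{2}\lambda^*\big(\boldsymbol{\alpha}(\mathbf{v}^*)^T\big)^2 .
\]
Thus the objective of \eqref{eq:OrigProHP_approx_2} is a nondecreasing function (since $\lambda^*\ge 0$) of the square of the scalar linear functional $g(\boldsymbol{\alpha})\triangleq\boldsymbol{\alpha}(\mathbf{v}^*)^T$, whereas the objective of the candidate program \eqref{eq:OrigProHP_approx_3simp} is $\tfrac{1}{2}\lambda^*\,g(\boldsymbol{\alpha})$, the same functional without the square.

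Next I would record two structural facts about the common feasible set $\mathcal{F}=\{\boldsymbol{\alpha}:\tfrac12\boldsymbol{\alpha}[\mathbf{p}_1]\boldsymbol{\alpha}^T\le\epsilon_1,\ \tfrac12\boldsymbol{\alpha}[\mathbf{p}_2]\boldsymbol{\alpha}^T\le\epsilon_2\}$. Each constraint is a sublevel set of a convex quadratic form, because $[\mathbf{p}_1]$ and $[\mathbf{p}_2]$ are diagonal with nonnegative entries and hence positive semidefinite; consequently $\mathcal{F}$ is convex and \eqref{eq:OrigProHP_approx_3simp}, having a linear objective over $\mathcal{F}$, is genuinely a convex program. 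Moreover each quadratic form is an even function of $\boldsymbol{\alpha}$, so $\mathcal{F}$ is symmetric about the origin: $\boldsymbol{\alpha}\in\mathcal{F}\iff-\boldsymbol{\alpha}\in\mathcal{F}$. Since $g$ is odd, $g(-\boldsymbol{\alpha})=-g(\boldsymbol{\alpha})$, so the image of $g$ over $\mathcal{F}$ is an interval symmetric about $0$, say $[-c,c]$ with $c=\max_{\boldsymbol{\alpha}\in\mathcal{F}}g(\boldsymbol{\alpha})=\max_{\boldsymbol{\alpha}\in\mathcal{F}}|g(\boldsymbol{\alpha})|\ge 0$, the last inequality holding because $\mathbf{0}\in\mathcal{F}$.

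Finally I would assemble the equivalence. By the first paragraph, maximizing the objective of \eqref{eq:OrigProHP_approx_2} over $\mathcal{F}$ is the same as maximizing $|g|$ over $\mathcal{F}$, whose optimal value is $c$ and whose maximizers are exactly the points of $\mathcal{F}$ with $g=\pm c$. By the second paragraph, the maximizers of the convex program \eqref{eq:OrigProHP_approx_3simp} are precisely the points with $g=c$; if $\boldsymbol{\alpha}^*$ is such a maximizer, then $-\boldsymbol{\alpha}^*$ is a point attaining $g=-c$, so the solution set of \eqref{eq:OrigProHP_approx_2} is $\{\pm\boldsymbol{\alpha}^*\}$, as claimed. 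The only delicate point is the correspondence of \emph{solution sets} rather than merely of optimal values: I must argue that squaring introduces no spurious maximizers and that the antipodal map $\boldsymbol{\alpha}^*\mapsto-\boldsymbol{\alpha}^*$ recovers every maximizer lost when passing from $|g|$ to $g$. This follows from the central symmetry of $\mathcal{F}$ together with the oddness of $g$; when $\mathcal{F}$ is strictly convex (e.g., all entries of $\mathbf{p}_1,\mathbf{p}_2$ positive, so that each constraint is an ellipsoid), the linear maximizer $\boldsymbol{\alpha}^*$ is unique and the solution set of \eqref{eq:OrigProHP_approx_2} is exactly the antipodal pair $\{+\boldsymbol{\alpha}^*,-\boldsymbol{\alpha}^*\}$.
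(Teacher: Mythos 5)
Your proof is correct and follows essentially the same route as the paper: factor the rank-one quadratic objective as $\tfrac{1}{2}\lambda^*\big(\boldsymbol{\alpha}(\mathbf{v}^*)^T\big)^2$, reduce to maximizing the linear functional $\boldsymbol{\alpha}(\mathbf{v}^*)^T$ over the common feasible set, and use the central symmetry of that set (evenness of the constraints) to conclude the solutions are the antipodal pair $\pm\boldsymbol{\alpha}^*$. If anything, your explicit argument that squaring introduces no spurious maximizers and that the antipodal map recovers the full solution set is more careful than the paper's brief remark that the two objectives ``depend on $\boldsymbol{\alpha}$ in the same manner.''
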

\begin{proof}
	For the optimization problem in \eqref{eq:OrigProHP_approx_2}, the matrix $(\mathbf{p}_1-\mathbf{p}_2)^T(\mathbf{p}_1-\mathbf{p}_2)$ is rank-1 with eigenvalue $\lambda^*=\|\mathbf{p}_1-\mathbf{p}_2\|^2$ and eigenvector $\mathbf{v}^*=\frac{\mathbf{p}_1-\mathbf{p}_2}{\|\mathbf{p}_1-\mathbf{p}_2\|}$.
	Thus, we have
	\begin{align}
	\frac{1}{2}\boldsymbol{\alpha}(\mathbf{p}_1-\mathbf{p}_2)^T(\mathbf{p}_1-\mathbf{p}_2)\boldsymbol{\alpha}^T\nonumber=\frac{1}{2}\lambda^*(\boldsymbol{\alpha}(\mathbf{v}^*)^T)^2,
	\end{align} 
	leading to the following optimization problem 
	\begin{equation}\label{eq:OrigProHP_approx_3}
	\begin{aligned}
	\max_{\substack{\boldsymbol{\alpha}}}\quad & 
	\frac{1}{2}\lambda^*(\boldsymbol{\alpha}(\mathbf{v}^*)^T)^2\\
	\mathrm{s.t.} \quad &\frac{1}{2}\boldsymbol{\alpha}\big[\mathbf{p}_1\big]\boldsymbol{\alpha}^T \leq \epsilon_1\\
	&\frac{1}{2}\boldsymbol{\alpha}\big[\mathbf{p}_2\big]\boldsymbol{\alpha}^T \leq \epsilon_2
	\end{aligned}
	\end{equation}
	In \eqref{eq:OrigProHP_approx_3simp} and \eqref{eq:OrigProHP_approx_3}, the two objective functions dependent on $\boldsymbol{\alpha}$ in the same manner, and their constraint functions are the same, such that the optimal solution $\boldsymbol{\alpha}^*$ of \eqref{eq:OrigProHP_approx_3simp} optimizes \eqref{eq:OrigProHP_approx_3}. Since the objective and constraint functions of \eqref{eq:OrigProHP_approx_3} are even, $-\boldsymbol{\alpha}^*$ is feasible and gives the optimal value, i.e, $-\boldsymbol{\alpha}^*$ is also the optimal solution of \eqref{eq:OrigProHP_approx_3}. Thus,  when $\boldsymbol{\alpha}^*$ is the optimal solution of \eqref{eq:OrigProHP_approx_3simp}, $\pm\boldsymbol{\alpha}^*$ are the optimal solutions of \eqref{eq:OrigProHP_approx_3}, i.e., the optimal solutions of \eqref{eq:OrigProHP_approx_2}. 
\end{proof}


The optimal solution $\boldsymbol{\alpha}^*$ of \eqref{eq:OrigProHP_approx_3simp} can be evaluated by observing that at $\boldsymbol{\alpha}^*$, either one or both constraints are active. The following lemma summarizes the optimal solution of \eqref{eq:OrigProHP_approx_3simp}. From Lemma \ref{lemma_equivalentOS}, one can then obtain the optimal solution \eqref{eq:OrigProHP_approx_2}.
\begin{lemma}\label{lamma_optimalalpha}
	The optimal solutions of \eqref{eq:OrigProHP_approx_2} are given by: 
	\begin{enumerate}
		\item if only the first constraint is active, i.e., $\epsilon_1$ and $\epsilon_2$ satisfy \eqref{eq:const1_active},
		the optimal solution $\boldsymbol{\alpha}^*$ is \eqref{eq:optalpha_const1};
		\item if only the second constraint is active, i.e., $\epsilon_1$ and $\epsilon_2$ satisfy \eqref{eq:const2_active},
		 the optimal solution $\boldsymbol{\alpha}^*$ is \eqref{eq:optalpha_const2};	
		\item when both constraints are active, the optimal solution $\boldsymbol{\alpha}^*$ is \eqref{eq:optimalalpha}
		with $\eta_1^*>0$ and $\eta_2^*>0$ satisfying \eqref{eq:eta1} and~\eqref{eq:eta2}.
	\end{enumerate}	
\end{lemma}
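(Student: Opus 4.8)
The plan is to solve the convex program \eqref{eq:OrigProHP_approx_3simp} directly via its Karush--Kuhn--Tucker (KKT) conditions, since by Lemma \ref{lemma_equivalentOS} its solution yields $\pm\boldsymbol{\alpha}^*$ for \eqref{eq:OrigProHP_approx_2}. First I would note that \eqref{eq:OrigProHP_approx_3simp} maximizes a linear objective over the intersection of two solid ellipsoids $\{\boldsymbol{\alpha}: \tfrac{1}{2}\boldsymbol{\alpha}[\mathbf{p}_k]\boldsymbol{\alpha}^T \leq \epsilon_k\}$, a compact convex set with nonempty interior ($\boldsymbol{\alpha}=\mathbf{0}$ is strictly feasible). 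Hence Slater's condition holds and KKT is both necessary and sufficient. Because the objective is linear and nonzero (as $\mathbf{p}_1\neq\mathbf{p}_2$ gives $\lambda^*=\|\mathbf{p}_1-\mathbf{p}_2\|^2>0$) while the feasible set is bounded, the maximizer lies on the boundary, so at least one constraint must be active; this is exactly what forces the three-case structure of the lemma.

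Next I would introduce multipliers $\eta_1,\eta_2\geq 0$ and form the Lagrangian $L=\tfrac{1}{2}\lambda^*\boldsymbol{\alpha}(\mathbf{v}^*)^T - \sum_{k}\eta_k\big(\tfrac{1}{2}\boldsymbol{\alpha}[\mathbf{p}_k]\boldsymbol{\alpha}^T-\epsilon_k\big)$. Setting the gradient with respect to the row vector $\boldsymbol{\alpha}$ to zero gives the stationarity relation $\boldsymbol{\alpha}\big(\eta_1[\mathbf{p}_1]+\eta_2[\mathbf{p}_2]\big)=\tfrac{1}{2}\lambda^*\mathbf{v}^*$. Since $[\mathbf{p}_1]$ and $[\mathbf{p}_2]$ are diagonal with strictly positive entries (as implicit in the appearance of $[(\mathbf{p}_k)^{-1}]$), the matrix $\eta_1[\mathbf{p}_1]+\eta_2[\mathbf{p}_2]$ is invertible whenever $(\eta_1,\eta_2)\neq(0,0)$, yielding the general form $\boldsymbol{\alpha}^*=\tfrac{1}{2}\lambda^*\mathbf{v}^*\big(\eta_1[\mathbf{p}_1]+\eta_2[\mathbf{p}_2]\big)^{-1}$ of \eqref{eq:optimalalpha}.

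Then I would dispatch the three cases by complementary slackness. When both constraints are active, substituting the general $\boldsymbol{\alpha}^*$ into $\tfrac{1}{2}\boldsymbol{\alpha}^*[\mathbf{p}_k](\boldsymbol{\alpha}^*)^T=\epsilon_k$ and exploiting that the diagonal matrices commute collapses the two equalities into exactly \eqref{eq:eta1} and \eqref{eq:eta2} for $\eta_1^*,\eta_2^*$. When only the first constraint is active, complementary slackness forces $\eta_2^*=0$, so $\boldsymbol{\alpha}^*\propto\mathbf{v}^*[(\mathbf{p}_1)^{-1}]$; imposing the single active equality pins down the scalar prefactor and produces \eqref{eq:optalpha_const1}, with the second case fully symmetric, giving \eqref{eq:optalpha_const2}. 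The regime conditions \eqref{eq:const1_active} and \eqref{eq:const2_active} then emerge by demanding that each single-constraint candidate be strictly feasible for the \emph{other} (inactive) constraint, i.e. substituting \eqref{eq:optalpha_const1} into $\tfrac{1}{2}\boldsymbol{\alpha}^*[\mathbf{p}_2](\boldsymbol{\alpha}^*)^T<\epsilon_2$ and simplifying the resulting ratio of diagonal quadratic forms in $\mathbf{v}^*$.

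The main obstacle I anticipate lies in the both-active case: I must show that \eqref{eq:eta1}--\eqref{eq:eta2} admit a solution with $\eta_1^*,\eta_2^*>0$ precisely when neither single-constraint condition \eqref{eq:const1_active} nor \eqref{eq:const2_active} holds, and that the three regimes partition the parameter range of $(\epsilon_1,\epsilon_2)$ without gaps or overlaps. Mutual exclusivity reduces to checking that \eqref{eq:const1_active} and \eqref{eq:const2_active} cannot hold simultaneously, which follows from a Cauchy--Schwarz-type inequality applied to the diagonal quadratic forms $\mathbf{v}^*[\,\cdot\,](\mathbf{v}^*)^T$; existence and positivity of $(\eta_1^*,\eta_2^*)$ in the complementary regime is a continuity and monotonicity argument on the map sending $(\eta_1,\eta_2)$ to the left-hand sides of \eqref{eq:eta1}--\eqref{eq:eta2}. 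Everything else is routine linear algebra with diagonal matrices.
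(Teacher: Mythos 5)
Your proposal is correct and follows essentially the same route as the paper: reduce via Lemma \ref{lemma_equivalentOS} to the convex program \eqref{eq:OrigProHP_approx_3simp}, invoke Slater's condition so the KKT conditions are necessary and sufficient, obtain $\boldsymbol{\alpha}^*=\frac{\lambda^*}{2}\mathbf{v}^*\big(\eta_1^*[\mathbf{p}_1]+\eta_2^*[\mathbf{p}_2]\big)^{-1}$ from stationarity, and split into the three cases by complementary slackness, with the regime conditions \eqref{eq:const1_active} and \eqref{eq:const2_active} obtained exactly as you describe, by substituting each single-constraint candidate into the inactive constraint and recovering $\pm\boldsymbol{\alpha}^*$ for \eqref{eq:OrigProHP_approx_2} via Lemma \ref{lemma_equivalentOS}. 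Your additional observations---that the linear nonzero objective forces the maximizer onto the boundary so $(\eta_1^*,\eta_2^*)\neq(0,0)$, and the Cauchy--Schwarz argument showing \eqref{eq:const1_active} and \eqref{eq:const2_active} cannot hold simultaneously---are sound refinements the paper leaves implicit (its lemma is stated case-conditionally, so the partition claim you worry about is not actually needed), not a different method.
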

\begin{proof}
	From Lemma \ref{lemma_equivalentOS}, to find the optimal solutions of \eqref{eq:OrigProHP_approx_2}, it suffices to find the optimal solution to \eqref{eq:OrigProHP_approx_3simp}. In \eqref{eq:OrigProHP_approx_3simp}, the objective function is linear in $\boldsymbol{\alpha}$. Since $\mathbf{p}_1$ and $\mathbf{p}_2$ are interior points of probability simplex, both $\big[\mathbf{p}_1\big]$ and $\big[\mathbf{p}_2\big]$ are positive definite, i.e., the two constraint functions are convex in $\boldsymbol{\alpha}$. Thus, this is a convex program. In addition, there exists $\boldsymbol{\alpha}=\mathbf{0}$ strictly satisfying the two constraints for positive $\epsilon_1$ and $\epsilon_2$, which means the constraints in \eqref{eq:OrigProHP_approx_3simp} satisfy Slater's constraint qualification for positive $\epsilon_1$ and $\epsilon_2$ \cite[Sec.~5.2.3]{boydconvex}. Therefore, the convex program has zero duality gap, and the optimal solutions are the solutions of the following Karush\--Kuhn\--Tucker (KKT) conditions \cite[Sec.~5.5.3]{boydconvex}:
	\begin{subequations}
	      \begin{align}
	            \label{eq:KKTcondition_derivative}
	      		\hspace{-.1in} \nabla\big\{f_0(\boldsymbol{\alpha}^*)\!+\!\eta_1^*\big(\epsilon_1\!-\!
	      		 f_1(\boldsymbol{\alpha}^*)\big)&\!+\!\eta_2^*\big(\epsilon_2\!-\!  f_2(\boldsymbol{\alpha}^*)\big)\big\}  \!=\!0\\
	      		 \label{eq:KKTcondition_loose1}
	      		\eta_1^*\big(\epsilon_1-f_1(\boldsymbol{\alpha}^*)\big) &=0\\
	      		\label{eq:KKTcondition_loose2}
	      		\eta_2^*\big(\epsilon_2-f_2(\boldsymbol{\alpha}^*)\big)&=0\displaybreak[0]\\
	      		\label{eq:KKTcondition_con1}
	      		 f_1(\boldsymbol{\alpha}^*)& \leq \epsilon_1\\
	      		\label{eq:KKTcondition_con2}
	      		 f_2(\boldsymbol{\alpha}^*)& \leq \epsilon_2\\
	      		\label{eq:KKTcondition_dualcon1}
	      		\eta_1^*&\geq 0\\ 
	      		\label{eq:KKTcondition_dualcon2}
	      		 \eta_2^* &\geq 0.
	      \end{align}
	\end{subequations}
	 where $f_0,f_1$, and $f_2$ represent the objective and two constraint functions of \eqref{eq:OrigProHP_approx_3simp}, respectively, $\boldsymbol{\alpha}^*$ is the optimal solution of \eqref{eq:OrigProHP_approx_3simp}, and $\eta^*_1$ and $\eta^*_2$ are the optimal solutions of the dual problem of \eqref{eq:OrigProHP_approx_3simp}. From \eqref{eq:KKTcondition_derivative}, we have
	 \begin{align}
	 \label{eq:optimal_alpha_inproof}
	 \boldsymbol{\alpha}^*=\frac{\lambda^*}{2}\mathbf{v}^*\Big(\eta_1^*\big[\mathbf{p}_1\big]+\eta_2^*\big[\mathbf{p}_2\big]\Big)^{-1}.
	 \end{align}
	 When $\eta^*_1>0$ and $\eta^*_2 = 0$, i.e., the first constraint is active, the optimal solution $\boldsymbol{\alpha}^*$ of \eqref{eq:OrigProHP_approx_3simp} is
	 \begin{align}
	 \label{eq:oneconstraint_proof_alpha1}
	 \boldsymbol{\alpha}^*=\frac{\lambda^*}{2}\mathbf{v}^*\big[(\eta_1^*\mathbf{p}_1)^{-1}\big],
	 \end{align}
	 such that from \eqref{eq:KKTcondition_loose1}, 
	 \begin{align}
	 \label{eq:oneconstraint_proof_eta1}
	 \eta_1^*=\sqrt{\frac{(\lambda^*)^2(\mathbf{v}^*)^T\big[(\mathbf{p}_1)^{-1}\big]\mathbf{v}^*}{8\epsilon_1}}.
	 \end{align}
	 Substituting $\eta_1^*$ from \eqref{eq:oneconstraint_proof_eta1} in \eqref{eq:oneconstraint_proof_alpha1}, we obtain 
	 \begin{align}
	 \label{eq:optalpha_const1_p}
	 \boldsymbol{\alpha}^*=\sqrt{\frac{2\epsilon_1}{\mathbf{v}^*\big[(\mathbf{p}_1)^{-1}\big](\mathbf{v}^*)^T}}\mathbf{v}^*\big[(\mathbf{p}_1)^{-1}\big]
	 \end{align}
	 In addition, for $\eta^*_2 = 0$, \eqref{eq:KKTcondition_con2} is a strict inequality if and only if   $\epsilon_1$ and $\epsilon_2$ satisfy \eqref{eq:const1_active}.
	 
	 When $\eta^*_1=0$ and $\eta^*_2 > 0$, with the same deduction based on \eqref{eq:KKTcondition_loose2} and \eqref{eq:optimal_alpha_inproof}, the optimal solution $\boldsymbol{\alpha}^*$ is
	 \begin{align}
	 	 	 \label{eq:optalpha_const2_p}
	 	 	 \boldsymbol{\alpha}^*=\sqrt{\frac{2\epsilon_2}{\mathbf{v}^*\big[(\mathbf{p}_2)^{-1}\big](\mathbf{v}^*)^T}}\mathbf{v}^*\big[(\mathbf{p}_2)^{-1}\big].
	\end{align} and \eqref{eq:KKTcondition_con1} is a strictly inequality if and only if   $\epsilon_1$ and $\epsilon_2$ satisfy \eqref{eq:const2_active}. 
	
	 When $\eta^*_1>0$ and $\eta^*_2 > 0$, the optimal solution $\boldsymbol{\alpha}^*$ is given by \eqref{eq:optimal_alpha_inproof}. From \eqref{eq:KKTcondition_loose1} and \eqref{eq:KKTcondition_loose2}, we have $f_1(\boldsymbol{\alpha}^*)=\epsilon_1$ and $f_2(\boldsymbol{\alpha}^*)=\epsilon_2$, such that $\eta_1^*$ and $\eta_2^*$ are verified to satisfy \eqref{eq:eta1} and \eqref{eq:eta2}.
	 
	 Finally, since \eqref{eq:optimal_alpha_inproof}, \eqref{eq:optalpha_const1_p} and \eqref{eq:optalpha_const2_p} yield  $\boldsymbol{\alpha}^*$ for \eqref{eq:OrigProHP_approx_3simp}, the optimal solutions for \eqref{eq:OrigProHP_approx_2} are obtained by considering both solutions $\pm\boldsymbol{\alpha}^*$ as proved in the Lemma \ref{lemma_equivalentOS}.
\end{proof}

\subsection{Proof of Theorem \ref{Theorem:optimalsol_HPapprox}}
Theorem \ref{Theorem:optimalsol_HPapprox} presents the optimal \textit{privacy mechanism} $\mathbf{W}'$ for the approximation problem \eqref{eq:OrigProHP_approx}. Its proof is directly from Theorem \ref{Theorem:approximation_simplification}, Lemma \ref{lemma_equivalentOS} and Lemma \ref{lamma_optimalalpha} as follows.  

\begin{proof}
	The optimal \textit{privacy mechanism} $\mathbf{W}'$ is a perturbation of $\mathbf{W}_0$ as $\mathbf{W}'=\mathbf{W}_0+\boldsymbol{\Theta}^*$, where $\boldsymbol{\Theta}^*$ be the optimal solution of the approximation problem \eqref{eq:OrigProHP_approx}. Since $\mathbf{A}^*$ optimizes \eqref{eq:OrigProHP_approx_1}, we have $\boldsymbol{\Theta}^*=\mathbf{A}^*\big[\sqrt{\mathbf{w}_0}\big]$. From Theorem \ref{Theorem:approximation_simplification}, $\mathbf{A}^*$ is given by $\mathbf{A}^*=(\boldsymbol{\alpha}^*)^T\mathbf{v}$, where $\mathbf{v}$ is a unit norm $M$\-- dimensional vector that is orthogonal to $\sqrt{\mathbf{w}_0}$, and $\boldsymbol{\alpha}^*$ is the optimal solution of \eqref{eq:OrigProHP_approx_2} as presented in Lemma \ref{lamma_optimalalpha}. Therefore, the optimal \textit{privacy mechanism} $\mathbf{W}'$ is \eqref{eq:W_generation}.	
\end{proof}

\begin{figure}[t]
	\centering
	\includegraphics[width=3.35in]{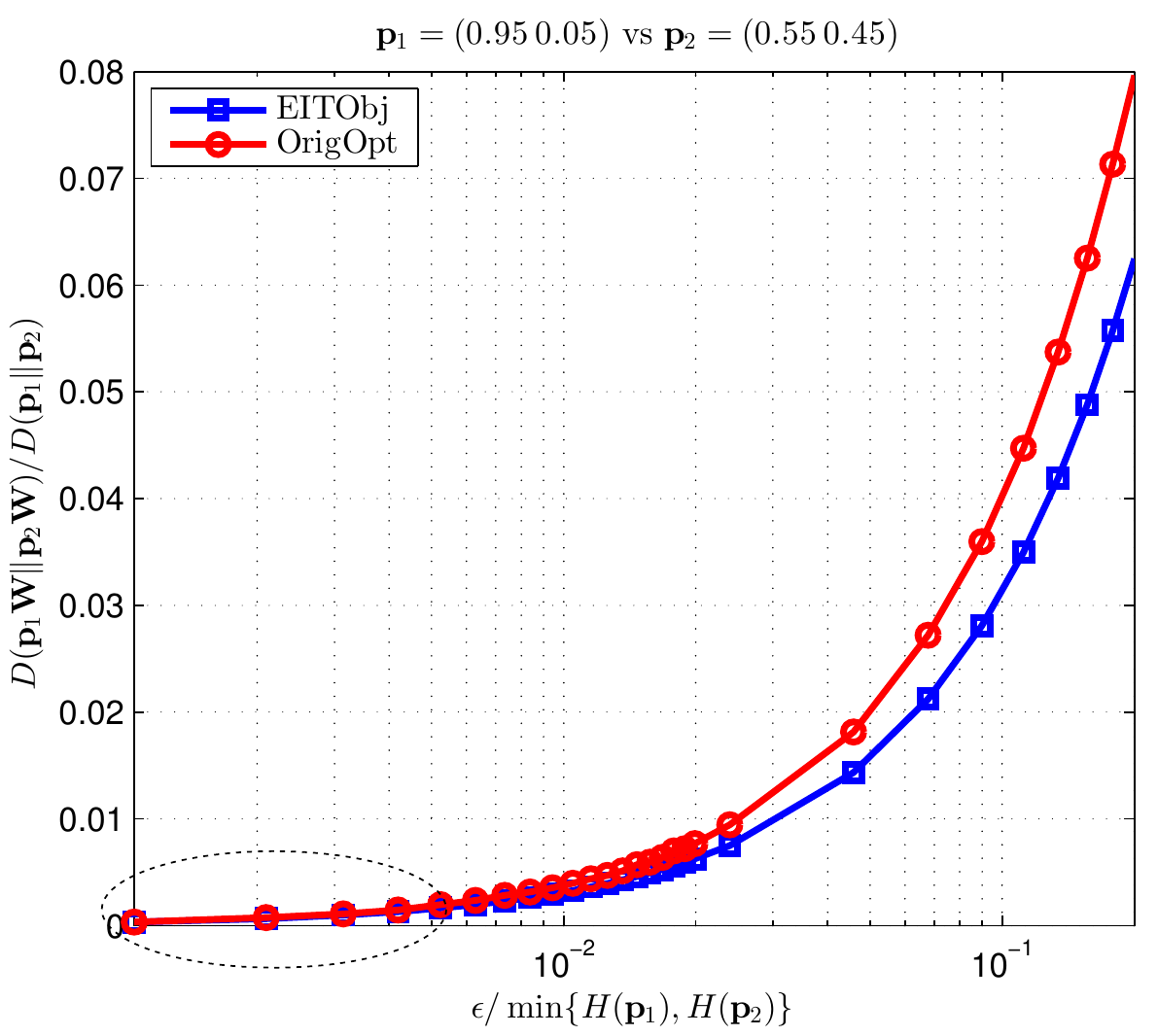}
	\caption{The relative utilities of $\mathbf{W}'$ and $\mathbf{W}^*$ for $\mathbf{p}_1=( 0.95,  0.05)$ and $\mathbf{p}_2=(0.55,0.45)$.}
	\label{subfig:fig1_1}
\end{figure}
\begin{figure}[t]
	\centering
	\includegraphics[width=3.35in]{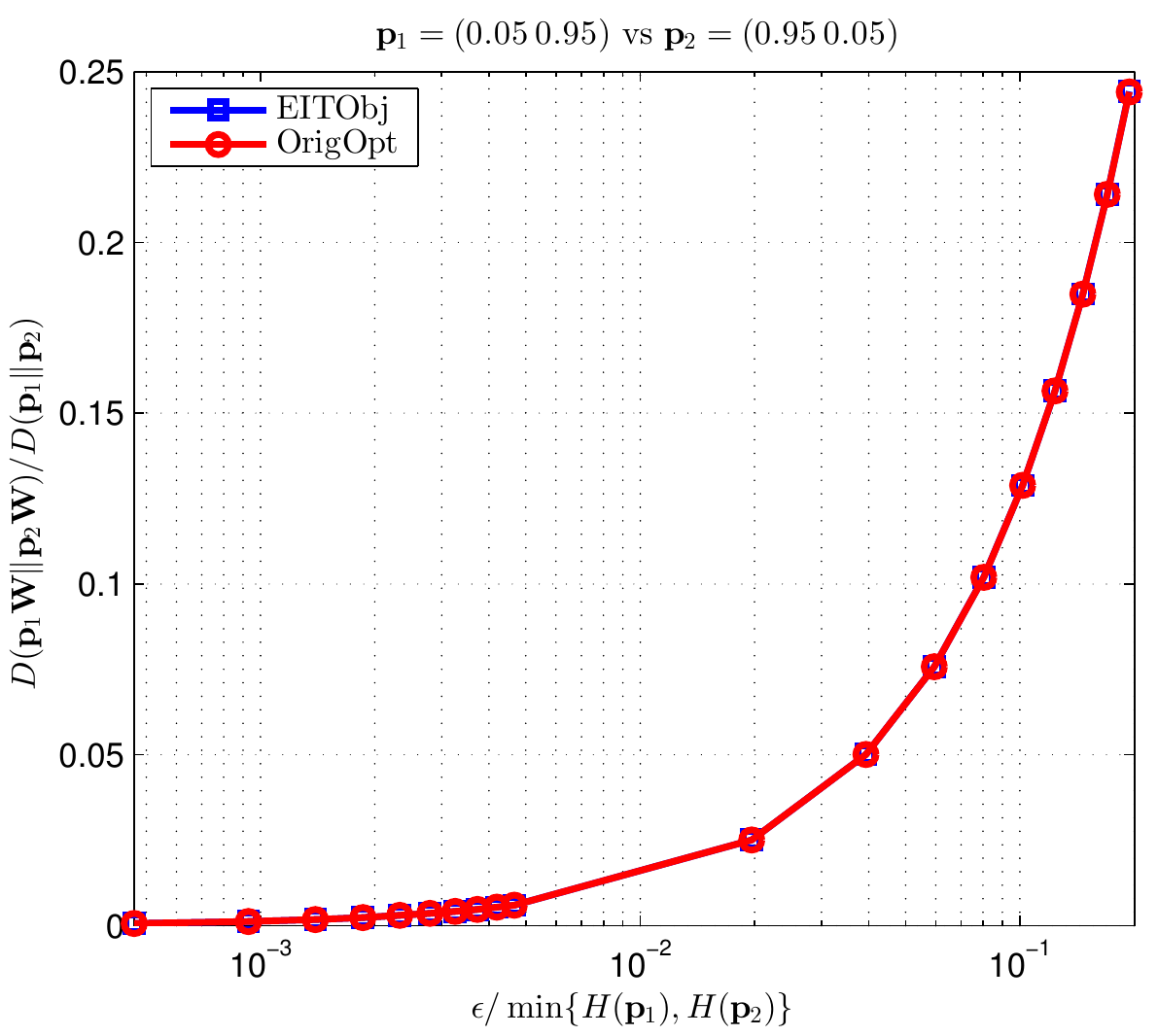}
	\caption{The relative utilities of $\mathbf{W}'$ and $\mathbf{W}^*$ for $\mathbf{p}_1=(0.05,0.95)$ and $\mathbf{p}_2=( 0.95,0.05)$.}
	\label{subfig:fig1_2}
\end{figure}
\section{ILLUSTRATION OF RESULTS}
Theorem \ref{Theorem:optimalsol_HPapprox} provides an approximation, $\mathbf{W}'$ in \eqref{eq:W_generation}, of the optimal \textit{privacy mechanism} for binary hypothesis testing. To evaluate the utility and privacy of $\mathbf{W}'$, we compare the values of $D(\mathbf{p}_1\mathbf{W}' \| \mathbf{p}_2\mathbf{W}')$ and $D(\mathbf{p}_1\mathbf{W}^* \| \mathbf{p}_2\mathbf{W}^*)$, where $\mathbf{W}^*$ is the optimal solution of the original utility-privacy tradeoff problem in \eqref{eq:OrigProHP}, when $I(\mathbf{p}_1,\mathbf{W}^*)$ and $I(\mathbf{p}_2,\mathbf{W}^*)$ are bounded by $\epsilon_1=\epsilon_2=\epsilon=\max\{I(\mathbf{p}_1,\mathbf{W}'),I(\mathbf{p}_2,\mathbf{W}')\}$. 

For the original utility-privacy tradeoff problem in \eqref{eq:OrigProHP}, the number of degrees of freedom in $\mathbf{W}$ is $M(M-1)$ for $M=N$. When $M=3$, finding the optimal \textit{privacy mechanism} $\mathbf{W}^*$ by the exhaustive  search for every single value of  $\epsilon$ is computationally expensive, taking hours on a desktop computer. Therefore, we only consider $M=N=2$. In this case, the probability distributions $\mathbf{p}_1$ and $\mathbf{p}_2$ are Bernoulli. We choose $\mathbf{w}_0$ as the uniform Bernoulli distribution (i.e., $\mathbf{w}_0=(0.5, 0.5)$) because this yielded the best approximations over all choices of $(\mathbf{p}_1,\mathbf{p}_2)$ in our numerical calculations.  The uniform distribution for $\mathbf{w}_0$ yields the best E-IT approximations.  With this choice of $\mathbf{w}_0$, from \eqref{eq:lemma1_v} and \eqref{eq:lemma1_vnorm}, $\mathbf{v}=\pm (\sqrt{0.5} ,-\sqrt{0.5})$.  In the following, we select four pairs of Bernoulli distributions for the two source classes to evaluate the approximation in a high privacy regime where $\epsilon \leq 0.2\min\{H(\mathbf{p}_1), H(\mathbf{p}_2)\}$. 

\begin{figure}[t]
	\centering
	\includegraphics[width=3.35in]{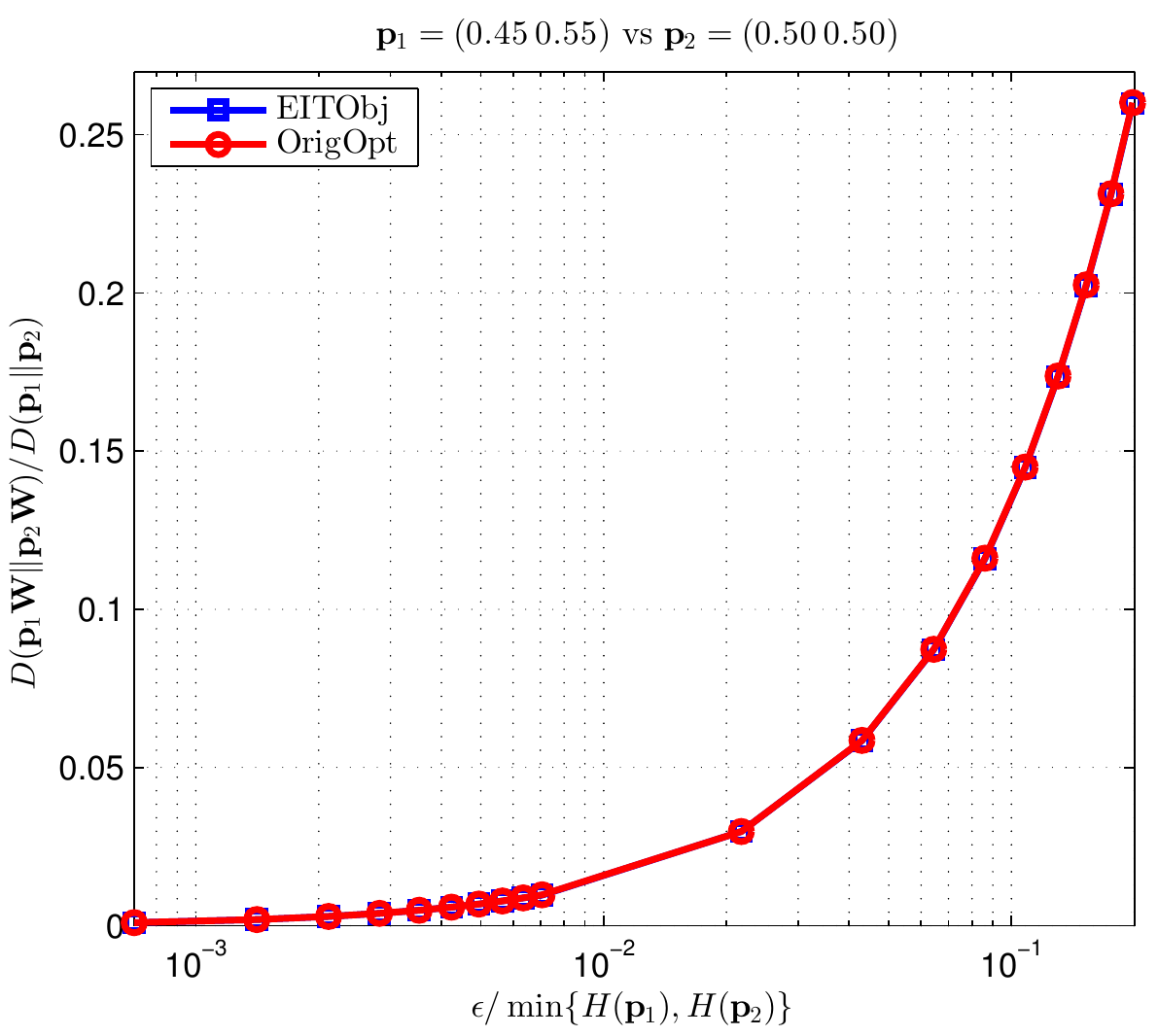}
	\caption{The relative utilities of $\mathbf{W}'$ and $\mathbf{W}^*$ for $\mathbf{p}_1=(0.45,0.55)$ and $\mathbf{p}_2=(0.50, 0.50)$.}
	\label{subfig:fig1_3}
\end{figure}

In Fig.~\ref{subfig:fig1_1}, the pair of   probability distributions for the two source classes are $\mathbf{p}_1=( 0.95,0.05)$ and $\mathbf{p}_2=(0.55,0.45)$. The approximation works well only in the region surrounded by the black-dotted ellipse in the plot, where the statistical leakages $I(\mathbf{p}_k,\mathbf{W}')$, $k\in\{1,2\}$, are less than $0.5\%$ of the minimal information measure of the two sources, namely  $\min\{H(\mathbf{p}_1),H(\mathbf{p}_2)\}$.  Fig.~\ref{subfig:fig1_2} and Fig.~\ref{subfig:fig1_3} show that the approximation performs excellently in the whole regime of interest for the two pairs of probability distributions  $\mathbf{p}_1=(0.05, 0.95)$ and $\mathbf{p}_2=( 0.95,  0.05)$  as well as $\mathbf{p}_1=(0.45, 0.55)$ and $\mathbf{p}_2=(0.50, 0.50)$.  In Fig.~\ref{subfig:fig1_4}, the two source classes are distributed according to $\mathbf{p}_1=(0.05, 0.95)$ and $\mathbf{p}_2=(0.10 , 0.90)$, respectively. The utilities for $\mathbf{W}'$ and $\mathbf{W}^*$ are almost the same only in the region surrounded by the black-dotted ellipse in the plot, where the statistical leakages $I(\mathbf{p}_k,\mathbf{W}')$, $k\in\{1,2\}$, are at most $0.1\%$ of the minimal information measure of the two sources.   

From Figs.~\ref{subfig:fig1_1}--\ref{subfig:fig1_4}, we deduce that for any two source distributions, there exists a high privacy regime where the Euclidean approximation is accurate. However, the specific regime in which the approximation works well differs for different pairs of source distributions. Specifically, when both distributions are close to the uniform or when both distributions are far apart from the uniform as well as each other, the set of leakage values for which the Euclidean approximation works well is larger. For the former, it can be easily seen that the Euclidean approximations of the relative entropy and mutual information are more accurate; for the latter, intuitively, the individual approximation errors  ``cancel out'' so the overall approximation is accurate. 


\section{Discussion and Concluding Remarks}
We have studied the statistical inference problem of binary hypothesis testing under privacy constraints for large datasets. The goal is to understand the guarantees that can be made on the probabilities of error for any data sequence when a mutual information based constraint on the leakage of data from either source classes is restricted. Our model seeks to understand if one can estimate the underlying source class of a given dataset without revealing the respondents of the data, i.e., those whose data is being used for testing. We have shown that the resulting utility-privacy tradeoff problem is one of determining the randomizing privacy mechanism which maximizes the relative entropy (statistical utility) between the output source classes while ensuring that the mutual information based leakages for both source classes are bounded. We have focused on the high privacy regime and developed a Euclidean approximation of the tradeoff problem; for this latter problem, we have shown that the optimal mechanism can be viewed as a perturbation of a near perfect privacy mechanism where the perturbation is computed as a solution to a convex optimization problem.

As is expected of statistical metrics such as relative entropy and mutual information, our results reveal that the randomizing mechanism perturbs the statistical outliers the most in each source class; such a mechanism ensures both utility (predominantly provided by the non-outliers) while preserving privacy of those most vulnerable to inference attacks. Finally, our results also suggest that it is possible to achieve a positive error exponent even in the high privacy regime admittedly with higher sample complexity (since the exponent is small), i.e., likened to the moderate deviations regime.  This work can be extended to study multi\--hypothesis tests, and partial tests with statistical knowledge. One can also study other privacy regimes (low, medium) in which the tradeoff can be approximately quantified.
\begin{figure}[t]
	\centering
	\includegraphics[width=3.35in]{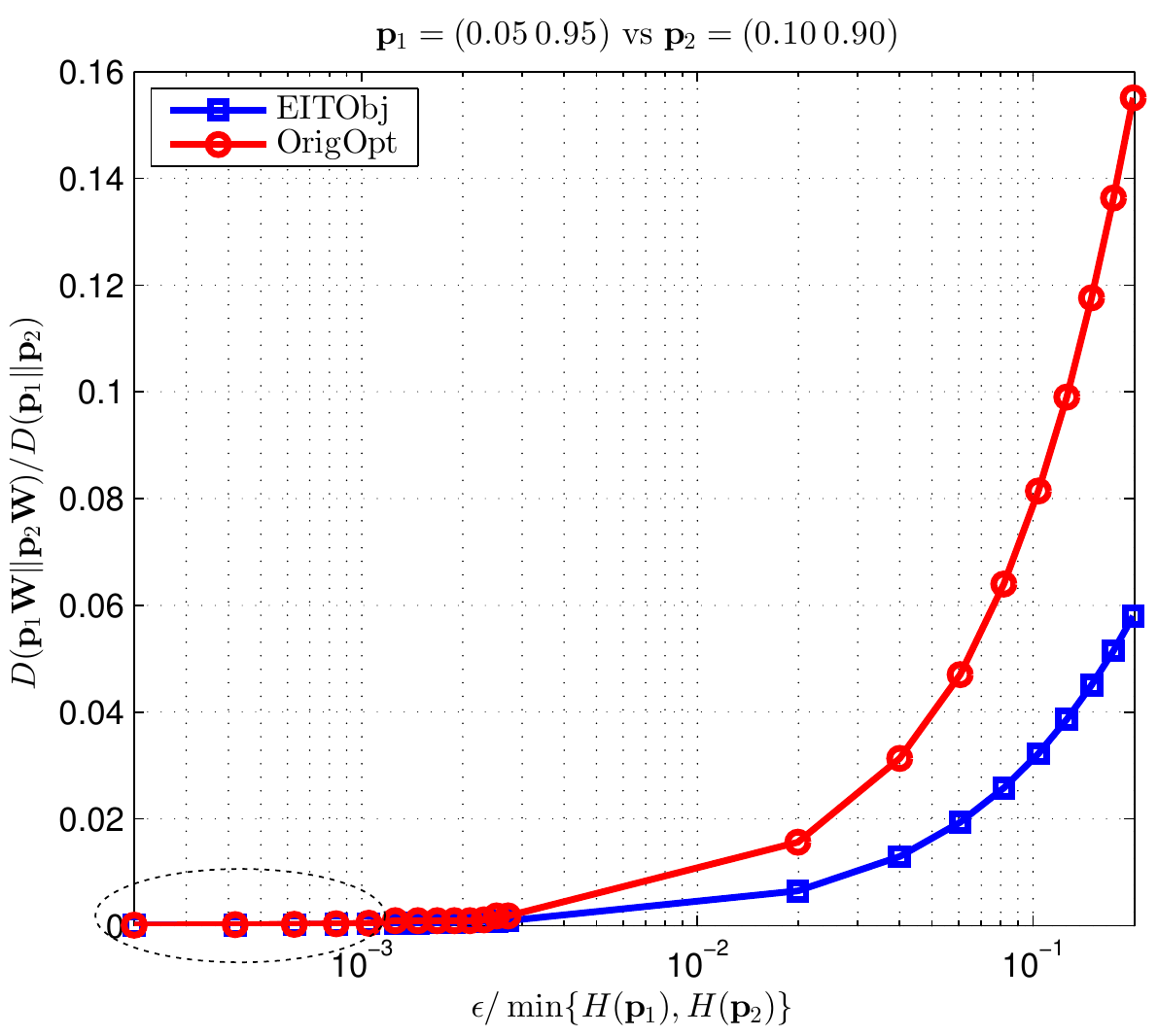}
	\caption{The relative utilities of $\mathbf{W}'$ and $\mathbf{W}^*$ for $\mathbf{p}_1=(0.05 ,0.95)^T$ and $\mathbf{p}_2=(0.10, 0.90)^T$.}
	\label{subfig:fig1_4}
\end{figure}

\addtolength{\textheight}{-12cm}   


\vspace{-.05in}
\bibliographystyle{IEEEtran}
\bibliography{Allerton_2016}

\end{document}